\documentclass[sn-mathphys,Numbered]{sn-jnl}


\usepackage{graphicx}
\usepackage{subfigure}
\usepackage{multirow}%
\usepackage{amsmath,amssymb,amsfonts}%
\usepackage{amsthm}%
\usepackage{mathrsfs}%
\usepackage[title]{appendix}%
\usepackage{xcolor}%
\usepackage{textcomp}%
\usepackage{manyfoot}%
\usepackage{booktabs}%
\usepackage{algpseudocode}%
\usepackage{listings}%
\usepackage[final]{changes}
\RequirePackage{algorithm,algpseudocode}



\newtheorem{thm}{Theorem}
\newtheorem{defin}{Definition}
\newtheorem{lem}{Lemma}
\newtheorem{assum}{Assumption}
\newtheorem{rem}{Remark}
\newtheorem{cor}{Corollary}

\newtheorem{prop}{Proposition}
\newtheorem{Ex}{Example}

\raggedbottom

\begin{document}

\title[Mixed membership distribution-free model]{Mixed membership distribution-free model}

\author*[1]{\fnm{Huan} \sur{Qing}}\email{qinghuan@u.nus.edu}
\author[2]{\fnm{Jingli} \sur{Wang}}\email{jlwang@nankai.edu.cn}
\affil[1]{\orgdiv{School of Economics and Finance}, \orgname{Chongqing University of Technology}, \city{Chongqing}, \postcode{400054}, \state{Chongqing}, \country{China}}

\affil[2]{\orgdiv{School of Statistics and Data Science, KLMDASR, LEBPS, and LPMC}, \orgname{Nankai University}, \city{Tianjin}, \postcode{300071}, \state{Tianjin}, \country{China}}


\abstract{We consider the problem of community detection in overlapping weighted networks, where nodes can belong to multiple communities and edge weights can be finite real numbers. To model such complex networks, we propose a general framework - the mixed membership distribution-free (MMDF) model. MMDF has no distribution constraints of edge weights and can be viewed as generalizations of some previous models, including the well-known mixed membership stochastic blockmodels. Especially, overlapping signed networks with latent community structures can also be generated from our model. We use an efficient spectral algorithm with a theoretical guarantee of convergence rate to estimate community memberships under the model. We also propose the fuzzy weighted modularity to evaluate the quality of community detection for overlapping weighted networks with positive and negative edge weights. We then provide a method to determine the number of communities for weighted networks by taking advantage of our fuzzy weighted modularity. Numerical simulations and real data applications are carried out to demonstrate the usefulness of our mixed membership distribution-free model and our fuzzy weighted modularity.}
\keywords{Overlapping community detection, fuzzy weighted modularity, spectral clustering, weighted networks}



\maketitle
\section{Introduction}\label{sec1}
For decades, the problem of community detection for networks has been actively studied in network science. The goal of community detection is to infer latent node's community information from the network, and community detection serves as a useful tool to learn network structure \cite{fortunato2010community,fortunato2016community,papadopoulos2012community}. Most of the networks that have been studied in literature are unweighted, i.e., the edges between vertices are either present or not \cite{newman2004analysis}. To solve the problem of community detection, researchers usually propose statistical models to model a network \cite{goldenberg2010a}. The classical Stochastic Blockmodel (SBM) \cite{SBM} models non-overlapping unweighted networks by assuming that the probability of an edge between two nodes depends on their respective communities and each node only belongs to one community. Recent developments of SBM can be found in \cite{abbe2017community}. In real-world networks, nodes may have an overlapping property and belong to multiple communities \cite{xie2013overlapping} while SBM can not model overlapping networks. The Mixed Membership Stochastic Blockmodel (MMSB) \cite{MMSB} extends SBM by allowing nodes to belong to multiple communities. Models proposed in \cite{DCSBM,OCCAM,MixedSCORE} extend SBM and MMSB by introducing node variation to model networks in which node degree varies. Based on SBM and its extensions, substantial works on algorithms, applications, and theoretical guarantees have been developed, to name a few, \cite{rohe2011spectral,choi2011stochastic,lei2015consistency,abbe2015community, SCORE,joseph2016impact,abbe2016exact,chen2018convexified,mao2020estimating,qing2023regularized}.

However, the above models are built for unweighted networks and they can not model weighted networks, where an edge weight can represent the strength of the connection between nodes. Weighted networks are ubiquitous in our daily life. For example, in the neural network of the Caenorhabditis elegans worm \cite{watts1998collective}, a link joins
two neurons if they are connected by either a synapse or a gap junction and the weight of a link is the number of synapses
and gap junctions \cite{opsahl2009clustering}, i.e., edge weights for the neural network are nonnegative integers; in the network of the 500 busiest commercial airports in the United States \cite{colizza2007reaction,opsahl2008prominence}, two airports are linked if a flight was scheduled between them in 2002 and the weight of a link is the number of available seats on the scheduled flights \cite{opsahl2009clustering}, i.e., edge weights for the US 500 airport network are nonnegative integers; in co-authorship networks \cite{liu2005co}, two authors are connected if they have co-authored at least one paper and the weight of a link is the number of papers they co-authored, i.e., edge weights for co-authorship network are nonnegative integers; in signed networks like the Gahuku-Gama subtribes network \cite{read1954cultures, yang2007community,kunegis2009slashdot,tang2016survey}, edge weights range in $\{1,-1\}$; in the Wikipedia conflict network, \cite{brandes2009network,kunegis2013konect}, a link represents a conflict between two users, link sign denotes positive and negative interactions, and the link weight denotes how strong the interaction is. For the Wikipedia conflict network, edge weights are real values.  To model non-overlapping weighted networks in which a node only belongs to one community, some models which can be viewed as SBM's extensions are proposed \cite{aicher2015learning,palowitch2018significance,xu2020optimal,ng2021weighted,qing2023DFM,BiDFMs} in recent years. However, these models can not model overlapping weighted networks in which nodes may belong to multiple communities. Though the multi-way blockmodels proposed in \cite{airoldi2013multi} can model mixed membership weighted networks (we also use mixed membership to denote overlapping occasionally), it has a strong requirement such that edge weights must be random variables generated from Normal distribution or Bernoulli distribution. This requirement makes the multi-way blockmodels fail to model the aforementioned weighted networks with nonnegative edge weights and signed networks. In this paper, we aim at closing this gap by building a general model for overlapping weighted networks.

The main contributions of this work include:

(1) We provide a general Mixed Membership Distribution-Free (MMDF for short) model for overlapping weighted networks in which a node can belong to multiple communities and an edge weight can be any real number. MMDF allows edge weights to follow any distribution as long as the expected adjacency matrix has a block structure. The classical mixed membership stochastic blockmodel is a sub-model of MMDF and overlapping signed networks with latent community structure can be generated from MMDF.

(2) We use a spectral algorithm to fit MMDF. We show that the proposed algorithm stably yields consistent community detection under MMDF. Especially, theoretical results when edge weights follow a specific distribution can be obtained immediately from our results.

(3) We provide fuzzy weighted modularity to evaluate the quality of mixed membership community detection for overlapping weighted networks. We then provide a method to determine the number of communities for overlapping weighted networks by increasing the number of communities until the fuzzy weighted modularity does not increase.

(4) We conduct extensive experiments to illustrate the advantages of MMDF and fuzzy weighted modularity.

\textbf{\textit{Notations.}}
We take the following general notations in this paper. Write $[m]:=\{1,2,\ldots,m\}$ for any positive integer $m$. For a vector $x$ and fixed $q>0$, $\|x\|_{q}$ denotes its $l_{q}$-norm, and we drop $q$ for $\|x\|_{q}$ when $q$ is 2. For a matrix $M$, $M'$ denotes the transpose of the matrix $M$, $\|M\|$ denotes the spectral norm, $\|M\|_{F}$ denotes the Frobenius norm, $\|M\|_{2\rightarrow\infty}$ denotes the maximum $l_{2}$-norm of all the rows of $M$, and $\|M\|_{\infty}:=\mathrm{max}_{i}\sum_{j}|M(i,j)|$ denotes the maximum absolute row sum of $M$. Let $\mathrm{rank}(M)$ denote the rank of matrix $M$. Let $\sigma_{i}(M)$ be the $i$-th largest singular value of matrix $M$, $\lambda_{i}(M)$ denote the $i$-th largest eigenvalue of the matrix $M$ ordered by the magnitude, and $\kappa(M)$ denote the condition number of $M$. $M(i,:)$ and $M(:,j)$ denote the $i$-th row and the $j$-th column of matrix $M$, respectively. $M(S_{r},:)$ and $M(:,S_{c})$ denote the rows and columns in the index sets $S_{r}$ and $S_{c}$ of matrix $M$, respectively. For any matrix $M$, we simply use $Y=\mathrm{max}(0, M)$ to represent $Y(i,j)=\mathrm{max}(0, M(i,j))$ for any $i,j$. $e_{i}$ is the indicator vector with a $1$ in entry $i$ and $0$ in all others.
\section{The Mixed Membership Distribution-Free Model}\label{intMMDF}
Consider an undirected weighted network $\mathcal{N}$ with $n$ nodes $\{1,2,\ldots,n\}$. Let $A\in \mathbb{R}^{n\times n}$ be the symmetric adjacency matrix of $\mathcal{N}$ such that $A(i,j)$ denotes the weight between node $i$ and node $j$ for $i,j\in[n]$. As a convention, we do not consider self-edges, so $A$'s diagonal elements are 0. $A$'s elements are allowed to be any finite real values instead of simple nonnegative values or 0 and 1. We assume all nodes in $\mathcal{N}$ belong to $K$ perceivable communities
\begin{align}\label{DefinCommunity}
\mathcal{C}^{(1)},\mathcal{C}^{(2)},\ldots,\mathcal{C}^{(K)}.
\end{align}
Unless specified, $K$ is assumed to be a known integer throughout this paper.
Since we consider mixed membership weighted networks in this paper, a node in $\mathcal{N}$ may belong to multiple communities with different weights. Let $\Pi\in \mathbb{R}^{n\times K}$ be the membership matrix such that for $i\in[n],k\in[K]$,
\begin{align}
&\mathrm{rank}(\Pi)=K, \Pi(i,k)\geq 0, \sum_{l=1}^{K}\Pi(i,l)=1,\label{DefinePI}\\
&\mathrm{Each~of~the ~}K\mathrm{~communities~has~at~least~one~pure~node}\label{pure},
\end{align}
where we call node $i$ `pure' if $\Pi(i,:)$ degenerates (i.e., one entry is 1, all others $K-1$ entries are 0) and `mixed' otherwise. In Equation (\ref{DefinePI}), since $\Pi(i,k)$ is the weight of node $i$ on community $\mathcal{C}^{(k)}$, $\|\Pi(i,:)\|_{1}=1$ means that $\Pi(i,:)$ is a $1\times K$ probability mass function (PMF) for node $i$. Equation (\ref{pure}) is important for the identifiability of our model. For convenience, call Equation (\ref{pure}) as pure nodes condition. For generating overlapping networks, the pure node condition is necessary for the model's identifiability, see models for overlapping unweighted networks considered in \cite{mao2020estimating, OCCAM,MixedSCORE}. Let $\mathcal{I}$ be the index of nodes corresponding to $K$ pure nodes, one from each community. Without loss of generality, let $\Pi(\mathcal{I},:)=I_{K}$, where $I_{K}$ is the $K\times K$ identity matrix.

Define a $K\times K$ connectivity matrix $P$ which satisfies
\begin{align}\label{definP}
 P=P',~\mathrm{rank}(P)=K,\mathrm{and~}\mathrm{max}_{k,l\in[K]}|P(k,l)|=1.
\end{align}
We'd emphasize that $P$ may have negative elements, the full rank requirement of $P$ is mainly for the identifiability of our model, and we set the maximum absolute value of $P$'s entries as 1 mainly for convenience. 

Let $\rho>0$ be a scaling parameter. For \emph{arbitrary distribution} $\mathcal{F}$ and all pairs of $(i,j)$ with $i,j\in[n]$, our model constructs the adjacency matrix of the undirected weighted network $\mathcal{N}$ such that $A(i,j)$ are independent random variables generated from $\mathcal{F}$ with expectation
\begin{align}\label{DefinOmega}
\mathbb{E}[A(i,j)]=\Omega(i,j), \mathrm{where~}\Omega:=\rho \Pi P\Pi'.
\end{align}
Call $\Omega$ the population adjacency matrix in this paper. Equation (\ref{DefinOmega}) means that  all elements of $A$ are independent random variables generated from an arbitrary distribution $\mathcal{F}$ with expectation $\Omega(i,j)$, without any prior knowledge on a specific distribution of $A(i,j)$ for $i,j\in[n]$. For comparison, mixed membership models considered in \cite{MMSB, OCCAM,MaoSVM,mao2020estimating,MixedSCORE} require all entries of $A$ are random variables generated from Bernoulli distribution with expectation $\Omega(i,j)$ since these models only model unweighted networks. Meanwhile, $\rho$'s range can vary for different distributions $\mathcal{F}$, see Examples \ref{NormalF}-\ref{SignedF} for detail.
\begin{defin}
(Model) Call Equations (\ref{DefinCommunity})-(\ref{DefinOmega}) the Mixed Membership Distribution-Free (MMDF) model and denote it by $MMDF_{n}(K, P, \Pi, \rho,\mathcal{F})$.
\end{defin}
The next proposition guarantees the identifiability of MMDF.
\begin{prop}\label{idMMDF}
(Identifiability). MMDF is identifiable: For eligible  $(P,\Pi)$ and $(\tilde{P}, \tilde{\Pi})$, if $\rho \Pi P\Pi'=\rho \tilde{\Pi}\tilde{P}\tilde{\Pi}'$, then $(\Pi,P)$ and $(\tilde{\Pi},\tilde{P})$ are identical up to a permutation of the $K$ communities.
\end{prop}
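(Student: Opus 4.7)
The plan is to use a column-space reduction, then invoke the pure-nodes condition to pin down $\Pi$, and finally recover $P$ from the full column rank of $\Pi$. Since $\rho>0$, the hypothesis $\rho\Pi P\Pi'=\rho\tilde{\Pi}\tilde{P}\tilde{\Pi}'$ simplifies to $\Pi P\Pi'=\tilde{\Pi}\tilde{P}\tilde{\Pi}'$. By Equations~(\ref{DefinePI}) and~(\ref{definP}), each of $\Pi,\tilde{\Pi},P,\tilde{P}$ has full rank $K$, and consequently both $\Pi P\Pi'$ and $\tilde{\Pi}\tilde{P}\tilde{\Pi}'$ have rank $K$ with column space equal to that of $\Pi$ (respectively of $\tilde{\Pi}$), since $P\Pi'$ and $\tilde{P}\tilde{\Pi}'$ are $K\times n$ of full row rank. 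Hence the column spaces of $\Pi$ and $\tilde{\Pi}$ coincide, and there exists a unique invertible $Q\in\mathbb{R}^{K\times K}$ satisfying $\tilde{\Pi}=\Pi Q$.

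The key step is then to upgrade $Q$ to a permutation matrix; this is where Equation~(\ref{pure}) becomes essential. Let $\mathcal{I}$ and $\tilde{\mathcal{I}}$ be pure-node index sets realizing $\Pi(\mathcal{I},:)=I_{K}$ and $\tilde{\Pi}(\tilde{\mathcal{I}},:)=I_{K}$ respectively. Substituting into $\tilde{\Pi}=\Pi Q$ yields $Q=\tilde{\Pi}(\mathcal{I},:)$ and $Q^{-1}=\Pi(\tilde{\mathcal{I}},:)$, so both $Q$ and $Q^{-1}$ are entrywise nonnegative with row sums equal to $1$. I would then invoke the classical fact that a nonnegative matrix whose inverse is also nonnegative must be a monomial matrix (a permutation times a positive diagonal); combined with row-stochasticity, the diagonal factor collapses to $I_{K}$, so $Q$ itself is a permutation matrix. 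Under the labeling convention already fixed in Section~\ref{intMMDF} (namely, identifying pure-node indices consistently across the two representations), we may take $Q=I_{K}$, and therefore $\tilde{\Pi}=\Pi$.

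With $\tilde{\Pi}=\Pi$ in hand, the identity $\Pi P\Pi'=\Pi\tilde{P}\Pi'$ together with the full column rank of $\Pi$ immediately yields $P=\tilde{P}$ after left-multiplying by $(\Pi'\Pi)^{-1}\Pi'$ and right-multiplying by its transpose. The symmetry of $P,\tilde{P}$ and the normalization $\max_{k,l}|P(k,l)|=1$ are automatically consistent with this conclusion and play no active role beyond ruling out trivial rescalings. The step I expect to be the main obstacle is the upgrade from ``$Q$ invertible with $Q,Q^{-1}\geq 0$ and row-stochastic'' to ``$Q$ is a permutation'': without the pure-nodes condition, mixed-membership decompositions are genuinely nonunique (the column-space argument alone allows any invertible change of basis), so care must be taken to exploit the pure-node rows of \emph{both} factorizations simultaneously to force the rigid combinatorial structure on $Q$.
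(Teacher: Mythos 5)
Your proof is correct, but it takes a genuinely different route from the paper's. The paper argues through the eigendecomposition $\Omega=U\Lambda U'$ and Lemma \ref{IS}: since both parameterizations produce the same $\Omega$, one writes $U=\Pi U(\mathcal{I},:)=\tilde{\Pi}U(\mathcal{I},:)$ with the corner matrix $B=U(\mathcal{I},:)$ invertible, which gives $\Pi=\tilde{\Pi}$ at once, and then $P=\tilde{P}$ is read off from the pure-node block $\Omega(\mathcal{I},\mathcal{I})=\rho P=\rho\tilde{P}$. You avoid the eigendecomposition entirely: a column-space argument yields $\tilde{\Pi}=\Pi Q$ with $Q$ invertible, the pure-node rows of \emph{both} factorizations show that $Q=\tilde{\Pi}(\mathcal{I},:)$ and $Q^{-1}=\Pi(\tilde{\mathcal{I}},:)$ are nonnegative and row-stochastic, the classical monomial-matrix fact forces $Q$ to be a permutation, and $P=\tilde{P}$ follows by multiplying by $(\Pi'\Pi)^{-1}\Pi'$ and its transpose. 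What your route buys is a more elementary, self-contained argument and a more explicit treatment of the label-permutation ambiguity: the paper's proof tacitly applies Lemma \ref{IS} to $(\tilde{P},\tilde{\Pi})$ with the \emph{same} index set $\mathcal{I}$, i.e., it already assumes the two labelings are aligned, which is precisely the point you make rigorous via $Q$ and the fixed convention $\Pi(\mathcal{I},:)=I_{K}$ (strictly, identifiability holds only up to such a relabeling, and both proofs absorb it into that convention). What the paper's route buys is brevity and reuse: the simplex structure $U=\Pi U(\mathcal{I},:)$ is exactly what drives the Ideal DFSP algorithm, so proving identifiability through it makes the algorithmic and identifiability arguments one and the same, and its final step $\Omega(\mathcal{I},\mathcal{I})=\rho P$ recovers $P$ without any pseudo-inverse computation.
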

All proofs of proposition, lemmas, and theorems are provided in the Appendix. MMDF includes some previous models as special cases.
\begin{itemize}
  \item When $\mathcal{F}$ is a Bernoulli distribution, MMDF reduces to MMSB \cite{MMSB}.
  \item When all nodes are pure, MMDF reduces to the distribution-free model \cite{qing2023DFM}.
  \item When $\mathcal{F}$ is a Poisson distribution, all entries of $P$ are nonnegative, and $\Pi$ follows Dirichlet distribution, MMDF reduces to the weighted MMSB of \cite{dulac2020mixed}.
  \item When all nodes are pure and $\mathcal{F}$ is a Bernoulli distribution, MMDF reduces to SBM \cite{SBM}.
  \item When $K=1$ and $\mathcal{F}$ is Bernoulli distribution, MMDF degenerates to the Erd\"os-R\'enyi random graph \cite{erdos1960evolution}.
\end{itemize}
\begin{rem}
(Comparison to LFR benchmark networks) In \cite{lancichinetti2008benchmark}, the authors proposed LFR benchmark graphs for testing community detection algorithms on non-overlapping unweighted networks. In \cite{lancichinetti2009benchmarks}, the authors proposed generalizations of LFR benchmark networks for testing community detection methods on overlapping weighted graphs. \added{The advantage of the LFR benchmark network and its extension is that they allow the degree of nodes to follow the power law distributions. The disadvantage of the LFR benchmark network and its extension is that they require that each node in a community has a fraction $1-\mu$ of its links with the other nodes of its community and a fraction $\mu$ with nodes from the other communities, where the mixing parameter $\mu$ is fixed for all nodes \cite{lancichinetti2008benchmark}. For comparison, though SBM, MMSB, and our MMDF can not guarantee that the degrees of nodes follow the power law distribution, they can control the expectation of node degrees. Another advantage of SBM, MMSB, and our MMDF is, that they allow nodes from different communities to connect with different probabilities. Meanwhile, SBM, MMSB, and our MMDF have theoretical guarantees of consistency for algorithms fitting them.} \deleted{Though the LFR benchmark network and its extensions can allow the degrees of nodes to follow the power law distribution while SBM, MMSB, and our MMDF can not, they do not have any theoretical guarantees of consistency for algorithms fitting them while SBM, MMSB, and our MMDF have theoretical guarantees of consistency and researchers can further study the theoretical property of our MMDF just as SBM and MMSB which have been widely studied in recent years.} For example, the DFSP algorithm provided in the next section is designed to fit our MMDF model, and a theoretical guarantee of consistency for the DFSP algorithm is provided in Section \ref{ConsistencyMMDF}. We also further analyze the influences of the model parameter $\rho$ on DFSP's performance under different distributions $\mathcal{F}$ in Section \ref{ConsistencyMMDF}. Furthermore, though the extended LFR benchmark \cite{lancichinetti2009benchmarks} can generate overlapping weighted networks, it only models binary overlapping memberships and positive edge weights while our MMDF can model real-valued membership vectors and negative edge weights. In particular, the LFR benchmark network and its extensions can not model overlapping signed networks while our MMDF can.
\end{rem}
\section{Algorithm}
The goal of community detection under MMDF is to recover the membership matrix $\Pi$ with network $\mathcal{N}$'s adjacency matrix $A$ and the known number of communities $K$, where $A$ is generated from $MMDF_{n}(K, P,\Pi, \rho,\mathcal{F})$ for any distribution $\mathcal{F}$. To estimate $\Pi$ with given $A$ and $K$, we start by providing an intuition on designing a spectral algorithm to fit MMDF from the oracle case when $\Omega$ is known.

Since $\mathrm{rank}(P)=K$ and $\mathrm{rank}(\Pi)=K$, $\mathrm{rank}(\Omega)=K$ by basic algebra under $MMDF_{n}(K, P,\Pi,\rho,\mathcal{F})$. Let $\Omega=U\Lambda U'$ be the compact eigen-decomposition of $\Omega$ where $U\in\mathbb{R}^{n\times K}, \Lambda\in \mathbb{R}^{K\times K}$, and $U'U=I_{K}$. The following lemma functions similar to Lemma 2.1 of \cite{mao2020estimating} and guarantees the existence of Ideal Simplex (IS for short), and the form $U=\Pi B$ is called IS when $\Pi$ satisfies Equations (\ref{DefinePI})-(\ref{pure}).
\begin{lem}\label{IS}
(Ideal Simplex). Under $MMDF_{n}(K, P,\Pi, \rho,\mathcal{F})$, there exists an unique $K\times K$ matrix $B$ such that $U=\Pi B$ where $B=U(\mathcal{I},:)$.
\end{lem}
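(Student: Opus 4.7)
The plan is to produce $B$ explicitly from the eigendecomposition and then read off its value at the pure-node rows. First I would note that since $\mathrm{rank}(P)=K$ and $\mathrm{rank}(\Pi)=K$, the product $\Omega=\rho\Pi P\Pi'$ has rank $K$ as well, so its compact eigendecomposition $\Omega=U\Lambda U'$ has $\Lambda\in\mathbb{R}^{K\times K}$ invertible. Because $\Omega U=U\Lambda$, one immediately gets $U=\Omega U\Lambda^{-1}=\rho\Pi P\Pi' U\Lambda^{-1}$, which exhibits $U$ as $\Pi$ times a $K\times K$ matrix. Concretely, setting $B:=\rho P\Pi' U\Lambda^{-1}$ yields the factorization $U=\Pi B$ required by the lemma.

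Next I would argue uniqueness. Suppose $\Pi B_{1}=\Pi B_{2}$ for two candidate matrices; then $\Pi(B_{1}-B_{2})=0$. Since $\mathrm{rank}(\Pi)=K$ by \eqref{DefinePI}, $\Pi$ has full column rank, so $B_{1}=B_{2}$. This gives uniqueness and shows that any valid $B$ must equal the one constructed above.

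Finally, to identify $B$ with $U(\mathcal{I},:)$, I would restrict the relation $U=\Pi B$ to the rows indexed by $\mathcal{I}$, the $K$ pure nodes guaranteed by \eqref{pure}. By the convention $\Pi(\mathcal{I},:)=I_{K}$ fixed after that condition, I obtain
\begin{equation*}
U(\mathcal{I},:)=\Pi(\mathcal{I},:)B=I_{K}B=B,
\end{equation*}
which is the desired formula. The argument is essentially linear-algebraic and none of the steps requires probabilistic input beyond the structural conditions \eqref{DefinePI}, \eqref{pure} and \eqref{definP}; the only mild subtlety I would be careful about is invoking invertibility of $\Lambda$, which is where the rank hypotheses on $P$ and $\Pi$ actually enter. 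Consequently I do not anticipate any substantive obstacle, and the proof should fit in a short paragraph.
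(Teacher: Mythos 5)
Your proposal is correct and follows essentially the same route as the paper: derive $U=\rho\Pi P\Pi'U\Lambda^{-1}$ from the compact eigendecomposition, set $B=\rho P\Pi'U\Lambda^{-1}$, and read off $B=U(\mathcal{I},:)$ from the pure-node rows with $\Pi(\mathcal{I},:)=I_{K}$. Your uniqueness step via the full column rank of $\Pi$ is in fact slightly more explicit than the paper's terse "so $B$ is unique," but it is the same argument in substance.
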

Given $\Omega$ and $K$, we can compute $U$ immediately by the top $K$ eigendecomposition of $\Omega$. Then, once we can obtain $U(\mathcal{I},:)$ from $U$, we can exactly recover $\Pi$ by $\Pi=UU^{-1}(\mathcal{I},:)$ since $U(\mathcal{I},:)\in\mathbb{R}^{K\times K}$ is a full rank matrix based on Lemma \ref{IS}.  As suggested by \cite{mao2020estimating}, for such IS, we can take advantage of the successive projection (SP) algorithm proposed in \cite{gillis2015semidefinite} (i.e., Algorithm \ref{alg:SP}) to $U$ with $K$ communities to exactly find the corner matrix $U(\mathcal{I},:)$.   For convenience, set $Z=UU^{-1}(\mathcal{I},:)$. Since $\Pi=Z$, we have $\Pi(i,:)=Z(i,:)\equiv\frac{Z(i,:)}{\|Z(i,:)\|_{1}}$ by the fact that $\|\Pi(i,:)\|_{1}=1$ for $i\in[n]$, where we write $\Pi(i,:)\equiv\frac{Z(i,:)}{\|Z(i,:)\|_{1}}$ mainly for the convenience to transfer the ideal algorithm given below to the real case.

The above analysis gives rise to the following algorithm called  Ideal DFSP (short for Distribution-Free SP algorithm) in the oracle case with known population adjacency matrix $\Omega$. Input $\Omega, K$. Output: $\Pi$.
\begin{itemize}
  \item Let $U$ be the top $K$ eigenvectors with unit-norm of $\Omega$.
  \item Run SP algorithm on all rows of $U$ with $K$ communities to obtain $\mathcal{I}$.
  \item Set $Z=UU^{-1}(\mathcal{I},:)$.
  \item Recover $\Pi$ by $\Pi(i,:)=Z(i,:)/\|Z(i,:)\|_{1}$ for $i\in[n]$.
\end{itemize}

Given $U$ and $K$, since the SP algorithm returns $\mathcal{I}$, we see that Ideal DFSP exactly returns $\Pi$, which supports the identifiability of MMDF in turn.

Next, we extend the ideal case to the real case. The community membership matrix $\Pi$ is unknown, and we aim at estimating it with given $(A,K)$ when $A$ is a random matrix generated from arbitrary distribution $\mathcal{F}$ under MMDF. Let $\tilde{A}=\hat{U}\hat{\Lambda}\hat{U}'$ be the top $K$ eigendecomposition of the adjacency matrix $A$ such that $\hat{U}\in \mathbb{R}^{n\times K}, \hat{\Lambda}\in \mathbb{R}^{K\times K},\hat{U}'\hat{U}=I_{K}$, and $\hat{\Lambda}$ contains the leading $K$ eigenvalues of $A$. Algorithm \ref{alg:DFSP} called DFSP is a natural extension of the Ideal DFSP to the real case, and DFSP is the SPACL algorithm without the prune step of \cite{mao2020estimating}, where we re-name it as DFSP to stress the distribution-free property of this algorithm.
\begin{algorithm}
\caption{DFSP}
\label{alg:DFSP}
\begin{algorithmic}[1]
\Require The adjacency matrix $A\in \mathbb{R}^{n\times n}$ and the number of communities $K$.
\Ensure The estimated $n\times K$ membership matrix $\hat{\Pi}$.
\State Compute $\tilde{A}=\hat{U}\hat{\Lambda}\hat{U}'$, the top $K$ eigendecomposition of $A$.
\State Run SP algorithm (i.e., Algorithm \ref{alg:SP}) on all rows of $\hat{U}$ with $K$ communities to obtain the estimated index set $\mathcal{\hat{I}}$ returned by SP.
\State Set $\hat{Z}=\mathrm{max}(0,\hat{U}\hat{U}^{-1}(\hat{\mathcal{I}},:))$.
\State Estimate $\Pi(i,:)$ by $\hat{\Pi}(i,:)=\hat{Z}(i,:)/\|\hat{Z}(i,:)\|_{1}, i\in[n]$.
\end{algorithmic}
\end{algorithm}

\begin{rem}\label{nonoverlapping}
DFSP can also obtain assignments to non-overlapping communities by setting $\hat{c}(i)=\mathrm{arg~max}_{1\leq k\leq K}\hat{\Pi}(i,k)$ for $i\in[n]$, where $\hat{c}(i)$ is the home base community that node $i$ belongs to.
\end{rem}

The time cost of DFSP mainly comes from the eigendecomposition step and SP step. The computational cost of top $K$ eigendecomposition is $O(Kn^{2})$ and the computational cost of SP is $O(nK^{2})$ \cite{MixedSCORE}. Because $K\ll n$ in this paper, as a result, the total computational complexity of DFSP is $O(Kn^{2})$. Results in  Section \ref{RealWeightedData} show that for a real-world weighted network with 13861 nodes, DFSP takes around 29 seconds to process a standard personal computer (Thinkpad X1 Carbon Gen 8) using MATLAB R2021b.
\section{Asymptotic Consistency}\label{ConsistencyMMDF}
In this section, we aim at proving that the estimated membership matrix $\hat{\Pi}$ concentrates around $\Pi$. Set $\tau=\mathrm{max}_{i,j\in[n]}|A(i,j)-\Omega(i,j)|$ and $\gamma=\frac{\mathrm{max}_{i,j\in[n]}\mathrm{Var}(A(i,j))}{\rho}$ where $\mathrm{Var}(A(i,j))$ denotes the variance of $A(i,j)$. $\tau$ and $\gamma$ are two parameters closely related to distribution $\mathcal{F}$. For different distribution $\mathcal{F}$, $\tau$ and $\gamma$ can be different, see Examples \ref{NormalF}-\ref{SignedF} for detail. For the theoretical study, we need the following assumption.
\begin{assum}\label{assumesparsity}
$\gamma\rho n\geq\tau^{2}\mathrm{log}(n)$.
\end{assum}
Assumption \ref{assumesparsity} means a lower bound requirement of $\gamma\rho n$. For different distribution $\mathcal{F}$, the exact form of Assumption \ref{assumesparsity} can be different because $\gamma$ relies on $\rho$, see Examples \ref{NormalF}-\ref{SignedF} for detail. The following theorem provides a theoretical upper bound on the $l_{1}$ errors of estimations for node memberships under MMDF.
\begin{thm}\label{Main}
(Error of DFSP) Under $MMDF_{n}(K,P,\Pi,\rho,\mathcal{F})$, let $\hat{\Pi}$ be obtained from Algorithm \ref{alg:DFSP}, when Assumption \ref{assumesparsity} holds, suppose $\sigma_{K}(\Omega)\geq C\sqrt{\gamma\rho n\mathrm{log}(n)}$ for some $C>0$, there exists a permutation matrix $\mathcal{P}\in\mathbb{R}^{K\times K}$ such that with probability at least $1-o(n^{-3})$, we have
\begin{align*}	\mathrm{max}_{i\in[n]}\|e'_{i}(\hat{\Pi}-\Pi\mathcal{P})\|_{1}=O(\frac{\kappa^{1.5}(\Pi'\Pi)\sqrt{\gamma n\mathrm{log}(n)}}{\sigma_{K}(P)\lambda_{K}(\Pi'\Pi)\sqrt{\rho}}).
\end{align*}
\end{thm}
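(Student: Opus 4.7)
The plan is to follow the now-standard four-step template for spectral mixed-membership estimation: (i) concentrate $A$ around $\Omega$ in spectral norm, (ii) translate this into a row-wise bound on the perturbed eigenvectors, (iii) invoke the robustness of the SP subroutine to control the corner matrix, and (iv) propagate the resulting errors through the inversion and the $l_1$-renormalization that define $\hat{\Pi}$.

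First, I would bound $\|A-\Omega\|$ by a matrix Bernstein inequality for sums of independent symmetric random matrices. Each entry $A(i,j)-\Omega(i,j)$ is centered, bounded in magnitude by $\tau$, and has variance at most $\gamma\rho$, so the matrix variance is at most $\gamma\rho n$. Under Assumption \ref{assumesparsity}, the variance term dominates the Bernstein envelope term and one obtains
\[
\|A-\Omega\|=O\bigl(\sqrt{\gamma\rho n\log n}\bigr)
\]
with probability $1-o(n^{-3})$. Combined with the assumption $\sigma_K(\Omega)\geq C\sqrt{\gamma\rho n\log n}$, Davis--Kahan's $\sin\Theta$ theorem supplies an orthogonal $\mathcal{O}\in\mathbb{R}^{K\times K}$ with $\|\hat U-U\mathcal{O}\|_F\lesssim\sqrt{K}\,\|A-\Omega\|/\sigma_K(\Omega)$. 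To turn this Frobenius bound into the row-wise control needed for SP I would use $U=\Pi B$ from Lemma \ref{IS}, giving $\|U\|_{2\to\infty}\lesssim 1/\sqrt{\lambda_K(\Pi'\Pi)}$, and then either an entrywise eigenvector perturbation argument of Cape--Tang--Priebe type or a direct argument through $\hat U\hat U'-UU'$ to obtain
\[
\|\hat U-U\mathcal{O}\|_{2\to\infty}\lesssim\frac{\|A-\Omega\|}{\sigma_K(\Omega)\sqrt{\lambda_K(\Pi'\Pi)}}.
\]

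Second, I would apply the Gillis--Vavasis robustness guarantee for SP (in the form used in Mao--Sarkar--Chakrabarti 2020), viewing the rows of $\hat U$ as a noisy version of the ideal simplex $U=\Pi B$ whose vertices are $U(\mathcal{I},:)$. This yields a permutation $\mathcal{P}$ and the bound
\[
\bigl\|\hat U(\hat{\mathcal{I}},:)-\mathcal{P}'U(\mathcal{I},:)\mathcal{O}\bigr\|_{2\to\infty}\lesssim\kappa\bigl(U(\mathcal{I},:)\bigr)\,\|\hat U-U\mathcal{O}\|_{2\to\infty},
\]
which is valid provided the perturbation is below the vertex-separation threshold; this is guaranteed by the lower bound on $\sigma_K(\Omega)$. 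The conditioning factor $\kappa(U(\mathcal{I},:))$ is then controlled using Lemma \ref{IS} and $\sigma_K(U(\mathcal{I},:))\gtrsim 1/\sqrt{\lambda_K(\Pi'\Pi)}$, contributing $\sqrt{\kappa(\Pi'\Pi)}$ to the final rate.

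Third, I would write $Z=UU^{-1}(\mathcal{I},:)$, $\hat Z=\max(0,\hat U\hat U^{-1}(\hat{\mathcal{I}},:))$ and track $\hat Z-Z\mathcal{P}$ through the identity
\[
\hat U\hat U^{-1}(\hat{\mathcal{I}},:)-U U^{-1}(\mathcal{I},:)\mathcal{P}=(\hat U-U\mathcal{O})\hat U^{-1}(\hat{\mathcal{I}},:)+U\mathcal{O}\bigl(\hat U^{-1}(\hat{\mathcal{I}},:)-\mathcal{O}'U^{-1}(\mathcal{I},:)\mathcal{P}\bigr),
\]
bounding the inverse perturbation by standard matrix-inverse sensitivity $\|A^{-1}-B^{-1}\|\leq\|A^{-1}\|\|B^{-1}\|\|A-B\|$ and using the lower bound on $\sigma_K(U(\mathcal{I},:))$ again. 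The $\max(0,\cdot)$ projection is a contraction in the row-wise $l_\infty/l_2$ norms since $Z(i,:)\geq0$, so it is harmless. The final $l_1$-renormalization step preserves the rate because $\|Z(i,:)\|_1=1$ for every $i$. Collecting the factors and substituting the bound $\sigma_K(\Omega)\geq\rho\,\sigma_K(P)\,\lambda_K(\Pi'\Pi)$ gives exactly the stated $O(\kappa^{1.5}(\Pi'\Pi)\sqrt{\gamma n\log n}/(\sigma_K(P)\lambda_K(\Pi'\Pi)\sqrt{\rho}))$.

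The main obstacle is the row-wise (2-to-infinity) perturbation bound for $\hat U$: a direct Frobenius bound would inject an extra $\sqrt{n}$ factor and destroy the per-node rate, so care is needed to either invoke an entrywise concentration result or exploit the simplex geometry to convert global perturbation into a per-row statement. Secondarily, the bookkeeping that matches $\kappa^{1.5}(\Pi'\Pi)$ rather than a higher power requires sharing one $\sqrt{\kappa(\Pi'\Pi)}$ factor between the SP conditioning of $U(\mathcal{I},:)$ and one $\lambda_K(\Pi'\Pi)^{-1/2}$ factor between the $\|U\|_{2\to\infty}$ bound and the norm of $U^{-1}(\mathcal{I},:)$, so that no factor is double-counted.
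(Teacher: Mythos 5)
Your proposal is correct and follows the same overall template as the paper: entrywise Bernstein concentration, a two-to-infinity eigenvector perturbation bound, SP robustness, and the bounds $\sigma_{K}(\Omega)\geq\rho\,\sigma_{K}(P)\lambda_{K}(\Pi'\Pi)$ and $\|U\|_{2\rightarrow\infty}\leq\lambda_{K}^{-1/2}(\Pi'\Pi)$ to assemble the rate. The differences are technical rather than conceptual. First, the paper does not bound the spectral norm $\|A-\Omega\|$ at all: it applies scalar Bernstein row by row to bound $\|A-\Omega\|_{\infty}=O(\sqrt{\gamma\rho n\log n})$ and then invokes Theorem 4.2 of Cape--Tang--Priebe, which takes exactly this infinity-norm input and returns $\|\hat U-U\mathcal{O}\|_{2\rightarrow\infty}\lesssim\frac{\|A-\Omega\|_{\infty}}{\sigma_{K}(\Omega)}\|U\|_{2\rightarrow\infty}$ directly; your Davis--Kahan/Frobenius step is therefore a detour, and the phrase ``turn this Frobenius bound into the row-wise control'' is slightly misleading, since no such conversion is possible without losing a $\sqrt{n}$ factor --- the row-wise bound must come straight from an entrywise perturbation theorem (your spectral-norm variant works too, via the Chen--Chi--Fan--Ma result that the paper's remark mentions as an alternative, but then the input is $\|A-\Omega\|$, not the Frobenius eigenvector error). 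Second, for everything downstream of the eigenvectors (SP corner recovery, the inverse perturbation of $U(\mathcal{I},:)$, the $\max(0,\cdot)$ truncation, and the $l_{1}$ renormalization) the paper does not redo the analysis: it observes that DFSP is SPACL without the pruning step and that Equation (3) in Theorem 3.2 of Mao--Sarkar--Chakrabarti is distribution-free, so it simply plugs its $\varpi=\|\hat U\hat U'-UU'\|_{2\rightarrow\infty}$ bound into $\mathrm{max}_{i}\|e_{i}'(\hat\Pi-\Pi\mathcal{P})\|_{1}=O(\varpi\,\kappa(\Pi'\Pi)\sqrt{\lambda_{1}(\Pi'\Pi)})$, which is exactly the $\kappa^{1.5}(\Pi'\Pi)$ bookkeeping you propose to carry out by hand. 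Your explicit propagation is a legitimate (if longer) route and makes the dependence on the vertex-matrix conditioning transparent; the paper's citation-based route buys brevity and makes clear that the only distribution-dependent ingredient is the row-wise eigenspace error through $\gamma$ and $\tau$.
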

Since our MMDF is distribution-free and $\mathcal{F}$ can be arbitrary distribution as long as Equation (\ref{DefinOmega}) holds, Theorem \ref{Main} provides a general theoretical upper bound of DFSP's error rate. Theorem \ref{Main} can be simplified by adding some conditions on $\lambda_{K}(\Pi'\Pi)$ and $K$, as shown by the following corollary.
\begin{cor}\label{AddConditions}
Under $MMDF_{n}(K,P,\Pi,\rho,\mathcal{F})$, when conditions of Theorem \ref{Main} hold, if we further suppose that $\lambda_{K}(\Pi'\Pi)=O(\frac{n}{K})$ and $K=O(1)$, with probability at least $1-o(n^{-3})$, we have
\begin{align*}
\mathrm{max}_{i\in[n]}\|e'_{i}(\hat{\Pi}-\Pi\mathcal{P})\|_{1}=O(\frac{1}{\sigma_{K}(P)}\sqrt{\frac{\gamma\mathrm{log}(n)}{\rho n}}).
\end{align*}
\end{cor}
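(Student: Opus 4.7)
The plan is to derive the corollary by directly substituting the two new assumptions into the bound of Theorem \ref{Main} and then cleaning up the resulting expression. Since all the probabilistic heavy lifting (concentration of $A$ around $\Omega$, perturbation of eigenvectors, error propagation through SP and through the row-normalization step) has already been done inside Theorem \ref{Main}, the corollary reduces to an algebraic simplification of the right-hand side.

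The key observation that makes the simplification go through is an upper bound on $\lambda_{1}(\Pi'\Pi)$, which controls the numerator via $\kappa(\Pi'\Pi)$. Because each row of $\Pi$ is a probability mass function on $K$ categories (by Equation (\ref{DefinePI})), one has $\|\Pi(i,:)\|^{2}\leq \|\Pi(i,:)\|_{1}=1$ for every $i\in[n]$, hence
\begin{align*}
\lambda_{1}(\Pi'\Pi)\leq \mathrm{tr}(\Pi'\Pi)=\sum_{i=1}^{n}\|\Pi(i,:)\|^{2}\leq n.
\end{align*}
Combining this with the hypothesis $\lambda_{K}(\Pi'\Pi)=O(n/K)$ (and the matching lower bound $\lambda_{K}(\Pi'\Pi)\gtrsim n/K$ that is implicit in writing this as an $O$-equality for an identifiable model), I obtain
\begin{align*}
\kappa(\Pi'\Pi)=\frac{\lambda_{1}(\Pi'\Pi)}{\lambda_{K}(\Pi'\Pi)}=O(K)=O(1),
\end{align*}
where the last step uses the assumption $K=O(1)$. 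In particular $\kappa^{1.5}(\Pi'\Pi)=O(1)$.

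Substituting $\kappa^{1.5}(\Pi'\Pi)=O(1)$ and $\lambda_{K}(\Pi'\Pi)=\Theta(n/K)=\Theta(n)$ into the bound of Theorem \ref{Main}, the right-hand side becomes
\begin{align*}
O\!\left(\frac{\sqrt{\gamma n\log(n)}}{\sigma_{K}(P)\cdot n\cdot\sqrt{\rho}}\right)=O\!\left(\frac{1}{\sigma_{K}(P)}\sqrt{\frac{\gamma\log(n)}{\rho n}}\right),
\end{align*}
which is exactly the stated conclusion, and it holds with probability at least $1-o(n^{-3})$ since the probability statement is inherited verbatim from Theorem \ref{Main}. Assumption \ref{assumesparsity} and the signal-strength condition $\sigma_{K}(\Omega)\geq C\sqrt{\gamma\rho n\log(n)}$ are simply carried over as part of ``conditions of Theorem \ref{Main} hold''.

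There is essentially no serious obstacle in this proof: the only nontrivial point is the bound $\lambda_{1}(\Pi'\Pi)\leq n$, which is a one-line consequence of the pmf constraint in Equation (\ref{DefinePI}). Everything else is substitution and cancellation of $K$, so the proof is short and mechanical.
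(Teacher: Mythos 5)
Your proof is correct and follows essentially the same route as the paper: substitute $\kappa(\Pi'\Pi)=O(1)$ and $\lambda_{K}(\Pi'\Pi)=\Theta(n)$ into the bound of Theorem \ref{Main} and simplify. You in fact supply more detail than the paper's one-line proof, namely the justification $\lambda_{1}(\Pi'\Pi)\leq\mathrm{tr}(\Pi'\Pi)\leq n$ from the pmf constraint and the explicit remark that the hypothesis $\lambda_{K}(\Pi'\Pi)=O(n/K)$ must be read as a two-sided order condition for the denominator substitution to be valid, both of which the paper leaves implicit.
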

In Corollary \ref{AddConditions}, the condition $\lambda_{K}(\Pi'\Pi)=O(\frac{n}{K})$ means that summations of nodes' weights in every community are in the same order, and $K=O(1)$ means that network $\mathcal{N}$ has a constant number of communities. The concise form of bound in Corollary \ref{AddConditions} is helpful for further analysis. By \cite{mao2020estimating}, we know that $\sigma_{K}(P)$ is a measure of the separation between communities and a larger $\sigma_{K}(P)$ means more well-separated communities. We are interested in the lower bound requirement on $\sigma_{K}(P)$ to make DFSP's error rate small. By Corollary \ref{AddConditions}, $\sigma_{K}(P)$ should shrink slower than $\sqrt{\frac{\gamma\mathrm{log}(n)}{\rho n}}$ for consistent estimation, i.e., $\sigma_{K}(P)\gg\sqrt{\frac{\gamma\mathrm{log}(n)}{\rho n}}$ should hold to make theoretical bound of error rate in Corollary \ref{AddConditions} go to zero as $n\rightarrow+\infty$. Meanwhile, Corollary \ref{AddConditions} says that DFSP stably yields consistent community detection under MMDF because the error bound in Corollary \ref{AddConditions} goes to zero as $n\rightarrow+\infty$.
\subsection{Examples}
The following examples provide $\gamma$'s upper bound, more specific forms of Assumption \ref{assumesparsity} and Theorem \ref{Main} for a specific distribution $\mathcal{F}$.
\begin{Ex}\label{NormalF}
When $A(i,j)\sim \mathrm{Normal}(\Omega(i,j),\sigma^{2}_{A})$ for $i,j\in[n]$. For this case, $\mathbb{E}[A(i,j)]=\Omega(i,j)$ holds by the property of Normal distribution, $P$ can have negative elements, $\rho$ ranges in $(0,+\infty)$ because the mean of Normal distribution can be any real values, $A\in\mathbb{R}^{n\times n}, \gamma=\frac{\sigma^{2}_{A}}{\rho}$, and $\tau$ is unknown. Setting $\gamma=\frac{\sigma^{2}_{A}}{\rho}$,  Assumption \ref{assumesparsity} becomes $\sigma^{2}_{A}n\geq\tau^{2}\mathrm{log}(n)$, a lower bound requirement on network size $n$. Setting $\gamma=\frac{\sigma^{2}_{A}}{\rho}$ in Theorem \ref{Main} obtains the theoretical upper bound of DFSP's $l_{1}$ error, and we see that increasing $\rho$ (or decreasing $\sigma^{2}_{A}$) decreases DFSP's error rate.
\end{Ex}
\begin{Ex}\label{BernoulliF}
When $A(i,j)\sim\mathrm{Bernoulli}(\Omega(i,j))$ for $i,j\in[n]$. $\mathbb{E}[A(i,j)]=\Omega(i,j)$ holds by the property of Bernoulli distribution, $P$ is an nonnegative matrix, $\rho$'s range is $(0,1]$ because $\rho P$ is a probability matrix, $A\in\{0,1\}^{n\times n}, \tau=1$, and $\gamma=\mathrm{Var}(A(i,j))/\rho=\Omega(i,j)(1-\Omega(i,j))/\rho\leq \Omega(i,j)/\rho\leq 1$. Setting $\gamma=1$ and $\tau=1$, Assumption \ref{assumesparsity} becomes $\rho n\geq\mathrm{log}(n)$. Setting $\gamma=1$ in Theorem \ref{Main} and we see that increasing $\rho$ decreases DFSP's error rate.
\end{Ex}
\begin{Ex}\label{PoissonF}
When $A(i,j)\sim \mathrm{Poisson}(\Omega(i,j))$ for $i,j\in[n]$. By the property of Poisson distribution, $\mathbb{E}[A(i,j)]=\Omega(i,j)$ holds, all entries of $P$ should be nonnegative, $\rho$'s range is $(0,+\infty)$ because the mean of Poisson distribution can be any positive number, $A$'s elements are nonnegative integers, $\tau$ is an unknown positive integer, and $\gamma=\mathrm{max}_{i,j\in[n]}\frac{\mathrm{Var}(A(i,j))}{\rho}=\mathrm{max}_{i,j\in[n]}\frac{\Omega(i,j)}{\rho}\leq 1$. Setting $\gamma=1$, Assumption \ref{assumesparsity} becomes $\rho n\geq\tau^{2}\mathrm{log}(n)$. Setting $\gamma=1$ in Theorem \ref{Main} and we see that increasing $\rho$ decreases DFSP's error rate.
\end{Ex}
\begin{Ex}\label{UniformF}
When $A(i,j)\sim\mathrm{Uniform}(0,2\Omega(i,j))$ for $i,j\in[n]$. For Uniform distribution, $\mathbb{E}[A(i,j)]=\frac{0+2\Omega(i,j)}{2}=\Omega(i,j)$ holds, $P$ is an nonnegative matrix, $\rho$'s range is $(0,+\infty)$, $A\in(0,2\rho)^{n\times n}$, $\tau$ is no larger than $2\rho$, and $\mathbb{E}[(A(i,j)-\Omega(i,j))^{2}]=\frac{4\Omega^{2}(i,j)}{12}\leq \frac{\rho^{2}}{3}$, i.e., $\gamma\leq\frac{\rho}{3}$. Setting $\gamma=\frac{\rho}{3}$, Assumption \ref{assumesparsity} becomes $\rho^{2}\geq\frac{3\tau^{2}\mathrm{log}(n)}{n}$. Setting $\gamma=\frac{\rho}{3}$ in Theorem \ref{Main}, we see that $\rho$ vanishes in the bound of the error rate, i.e., increasing $\rho$ does not influence DFSP's performance.
\end{Ex}
\begin{Ex}\label{SignedF}
MMDF can generate overlapping signed network by setting $\mathbb{P}(A(i,j)=1)=\frac{1+\Omega(i,j)}{2}$ and $\mathbb{P}(A(i,j)=-1)=\frac{1-\Omega(i,j)}{2}$ for $i,j\in[n]$. For this case, $\mathbb{E}[A(i,j)]=\Omega(i,j)$ holds, $P$ can have negative entries, $\rho$'s range is $(0,1)$ because $-1\leq \Omega(i,j)\leq1$, $A\in\{1,-1\}^{n\times n}, \tau\leq2$, and $\mathrm{Var}(A(i,j))=1-\Omega^{2}(i,j)\leq1$, i.e., $\gamma\leq \frac{1}{\rho}$. Setting $\gamma=\frac{1}{\rho}$ and $\tau=2$, Assumption \ref{assumesparsity} becomes $n\geq4\mathrm{log}(n)$. Setting $\gamma=\frac{1}{\rho}$ in Theorem \ref{Main} and we see that increasing $\rho$ decreases the error rate.
\end{Ex}
More than the above distributions, the distribution-free property of MMDF allows $\mathcal{F}$ to be any other distribution as long as Equation (\ref{DefinOmega}) holds. For example, $\mathcal{F}$ can be Binomial, Double exponential, Exponential, Gamma, and Laplace distributions in \url{http://www.stat.rice.edu/~dobelman/courses/texts/distributions.c&b.pdf}. Details on the probability mass function or probability density function on distributions discussed in this paper can also be found in the above URL link. Generally speaking, the distribution-free property guarantees the generality of our model MMDF, the DFSP algorithm, and our theoretical results.

\subsection{Missing edge}
From Examples \ref{NormalF}, \ref{UniformF}, and \ref{SignedF}, we find that $A(i,j)$ is almost always nonzero for $i\neq j$, which is impractical for real-world large-scale networks in which many nodes have no connections \cite{lei2015consistency}. Similar to \cite{xu2020optimal,qing2023DFM}, an edge with weight 0 is deemed as a missing edge. We generate missing edges for overlapping undirected weighted networks in the following way.

Let $\mathcal{A}\in\{0,1\}^{n\times n}$ be a symmetric and connected adjacency matrix of an undirected unweighted network. To model real-world large-scale overlapping undirected weighted networks with missing edges, for $i,j\in[n]$, we update $A(i,j)$ by $A(i,j)\mathcal{A}(i,j)$, where $A$ is generated from our model MMDF.  $\mathcal{A}$ can be generated from any models such as the Erd\"os-R\'enyi random graph $G(n,p)$ \cite{erdos1960evolution}, SBM, and MMSB as long as these models can generate undirected unweighted networks.

In particular, when $\mathcal{A}$ is generated from the Erd\"os-R\'enyi random graph $G(n,p)$ \cite{erdos1960evolution}, $\mathbb{P}(\mathcal{A}(i,j)=1)=p$ and $\mathbb{P}(\mathcal{A}(i,j)=0)=1-p$ for $i,j\in[n]$, i.e., increasing $p$ decreases the number of missing edges in $A$. We call $p$ sparsity parameter because it controls network sparsity. To make $\mathcal{A}$ be connected with high probability, we need $p\geq \frac{\mathrm{log}(n)}{n}$ \cite{abbe2017community}.

\section{Estimation of the Number of Communities}
In the DFSP algorithm, the number of communities $K$ should be known in advance, which is usually impractical for real-world networks. Here, we introduce fuzzy weighted modularity, then we combine it with DFSP to estimate $K$ for overlapping weighted networks.

Recall that $A$ considered in this paper can have negative edge weights, we define our fuzzy weighted modularity by considering both positive and negative edge weights in $A$. Let $A^{+}=\mathrm{max}(0,A)$ and $A^{-}=\mathrm{max}(0,-A)$, where $A^{+}$ and $A^{-}$ are two $n\times n$ symmetric matrices with nonnegative elements and $A(i,j)=A^{+}(i,j)-A^{-}(i,j)$ for $i,j\in[n]$. Let $d^{+}$ and $d^{-}$ be two  $n\times 1$ vectors such that  $d^{+}(i)=\sum_{j=1}^{n}A^{+}(i,j)$ and $d^{-}(i)=\sum_{j=1}^{n}A^{-}(i,j)$ for $i\in[n]$. Let $m^{+}=\sum_{i=1}^{n}d^{+}(i)/2$ and $m^{-}=\sum_{i=1}^{n}d^{-}(i)/2$.

For arbitrary community detection method $\mathcal{M}$, without confusion, let $\hat{\Pi}$ be an $n\times k$ estimated membership matrix returned by running method $\mathcal{M}$ on $A$ with $k$ communities, where all entries of $\hat{\Pi}$ are nonnegative and $\sum_{l}^{k}\hat{\Pi}(i,l)=1$ for $i\in[n]$. Based on the estimated membership matrix $\hat{\Pi}$, modularity $Q^{+}$  for $A$'s positive elements and modularity $Q^{-}$ for $A$'s negative elements are defined as
\begin{align*}
&Q^{+}=\frac{1}{2m^{+}}\sum_{i=1}^{n}\sum_{j=1}^{n}(A^{+}(i,j)-\frac{d^{+}(i)d^{+}(j)}{2m^{+}})\hat{\Pi}(i,:)\hat{\Pi}'(j,:)1_{m^{+}>0},\\ &Q^{-}=\frac{1}{2m^{-}}\sum_{i=1}^{n}\sum_{j=1}^{n}(A^{-}(i,j)-\frac{d^{-}(i)d^{-}(j)}{2m^{-}})\hat{\Pi}(i,:)\hat{\Pi}'(j,:)1_{m^{-}>0},
\end{align*}
where $1_{m^{+}>0}$ and $1_{m^{-}>0}$ are indicator functions such that $1_{m^{+}>0}=1$ if $m^{+}>0$ and 0 otherwise, $1_{m^{-}>0}=1$ if $m^{-}>0$ and 0 otherwise. We define our fuzzy weighted modularity as
\begin{align}\label{Modularity}
Q_{\mathcal{M}}(k)=\frac{2m^{+}}{2m^{+}+2m^{-}}Q^{+}-\frac{2m^{-}}{2m^{+}+2m^{-}}Q^{-}.
\end{align}
Our fuzzy weighted modularity computed using Equation~(\ref{Modularity}) measures the quality of overlapping community partition. Similar to the Newman-Girvan modularity \cite{newman2004finding,newman2006modularity}, a larger fuzzy weighted modularity $Q_{\mathcal{M}}(k)$ indicates a better estimation of community membership. Meanwhile, our fuzzy weighted modularity includes some previous modularity as special cases.
\begin{itemize}
  \item When all nodes in $\hat{\Pi}$ are pure, $A$ has both positive and negative elements (i.e., $m^{+}>0$ and $m^{-}>0$), our modularity reduces to the modularity for signed networks provided in Equation (17) of \cite{gomez2009analysis}.
  \item When all elements of $A$ are nonnegative (i.e., $m^{-}=0$), our modularity reduces to the fuzzy modularity provided in Equation (14) of \cite{nepusz2008fuzzy}.
  \item When all nodes in $\hat{\Pi}$ are pure and all elements of $A$ are nonnegative, our modularity reduces to the popular Newman-Girvan modularity \cite{newman2004finding,newman2006modularity}.
\end{itemize}
To determine the number of communities, we follow the strategy provided in \cite{nepusz2008fuzzy}. In detail, we iteratively increase $k$ and choose the one maximizing our fuzzy weighted modularity computed via Equation~(\ref{Modularity}) using method $\mathcal{M}$.
\begin{rem}
When all nodes are pure and $\Pi$, the ground truth of memberships, is known, Normalized Mutual Information (NMI) \cite{strehl2002cluster,danon2005comparing,bagrow2008evaluating} and Adjusted Rand Index (ARI) \cite{hubert1985comparing,vinh2009information} are two commonly used similarity metrics of community detection to measure the performance of a community detection algorithm. However, NMI and ARI require the true memberships to be known while $\Pi$ for most real-world networks is unknown. For such cases, like the famous  Newman-Girvan modularity \cite{newman2004finding,newman2006modularity}, our fuzzy weighted modularity can be used to measure the quality of overlapping community detection for weighted networks even with negative weights, i.e., our fuzzy weighted modularity can be a measure of overlapping weighted networks without real community memberships.
\end{rem}
\section{Experimental Results}
This section conducts extensive experiments to demonstrate that DFSP is effective for mixed membership community detection and our fuzzy weighted modularity is capable of the estimation of the number of communities for mixed membership weighted networks generated from our MMDF model. We conducted all experiments on a standard personal computer (Thinkpad X1 Carbon Gen 8) using MATLAB R2021b. First, we introduce comparison algorithms for each task. Next, evaluation metrics are introduced. Finally, we compare DFSP and our method for determining $K$ with their respective comparison algorithms on synthetic and real-world networks.
\subsection{Comparison algorithms}
For the task of community detection, we compare DFSP with the following algorithms.
\begin{itemize}
  \item \textbf{GeoNMF} \cite{GeoNMF} adapts nonnegative matrix factorization to infer community memberships for networks generated from MMSB for overlapping unweighted networks.
  \item \textbf{SVM-cone-DCMMSB (SVM-cD)} \cite{MaoSVM} and \textbf{OCCAM} \cite{OCCAM} are two spectral algorithms for detecting community memberships for networks generated from the overlapping continuous community assignment model for overlapping unweighted networks \cite{OCCAM}.
\end{itemize}
\begin{rem}
GeoNMF and OCCAM may fail to output when $A$ has negative entries. If this happens, to make them work, we make all elements of $A$ positive by adding a sufficiently large constant for GeoNMF and OCCAM.
\end{rem}
For the task of determining the number of communities,  we let $\mathcal{M}$ be our DFSP algorithm and call our strategy for determining $K$ as KDFSP, and we compare KDFSP with methods as follows.
\begin{itemize}
  \item \textbf{NB} and \textbf{BHac} \cite{le2022estimating} are two model-free approaches to estimate $K$. NB estimates $K$ based on the non-backtracking matrix and BHac is designed based on the Bethe Hessian matrix.
\end{itemize}
\subsection{Evaluation metrics}
For the task of mixed membership community detection, to evaluate the performance of different community detection approaches, different evaluation criteria are adopted according to the fact that whether the ground truth membership matrix $\Pi$ is known.

\textbf{Metrics for networks with ground truth.} For this case, we consider the following metrics to evaluate the performance of mixed membership community detection methods.
\begin{itemize}
  \item \textbf{Hamming error} measures the $l_{1}$ difference between the true membership matrix and the estimated membership matrix:
      \begin{align*}
\mathrm{Hamming~error=}\mathrm{min}_{\mathcal{P}\in\{ K\times K\mathrm{~permutation~matrix}\}}\frac{1}{n}\|\hat{\Pi}-\Pi\mathcal{P}\|_{1}.
\end{align*}
Hamming error ranges in $[0,1]$, and a smaller Hamming error indicates a better estimation of community membership. Note that our theoretical result in Theorem \ref{Main} is measured by Hamming error.
  \item \textbf{Relative error} measures the $l_{2}$ difference between $\Pi$ and $\hat{\Pi}$:
      \begin{align*}
\mathrm{Relative~error}=\mathrm{min}_{\mathcal{P}\in\{ K\times K\mathrm{~permutation~matrix}\}}\frac{\|\hat{\Pi}-\Pi\mathcal{P}\|_{F}}{\|\Pi\|_{F}}.
\end{align*}
Relative error is nonnegative, and it is the smaller the better.
\end{itemize}
We do not use metrics like NMI \cite{strehl2002cluster,danon2005comparing,bagrow2008evaluating}, ARI \cite{hubert1985comparing,vinh2009information}, and overlapping NMI \cite{lancichinetti2009detecting} because they require binary overlapping membership vectors \cite{MaoSVM} while $\Pi$ considered in this paper is not binary unless all nodes are pure.

\textbf{Metrics for networks without ground truth.} For this case, we use our fuzzy weighted modularity in Equation (\ref{Modularity}) to measure the quality of detected communities.

\textbf{Accuracy rate.} For the task of determining the number of communities, similar to \cite{le2022estimating}, we use Accuracy rate to measure the performance of KDFSP and its competitors in our simulation studies, where Accuracy rate is the fraction of times a method correctly estimates $K$.
\subsection{Simulations}\label{secSimulation}
For all simulations, unless specified, set $n=200, K=3$, and $n_{0}=40$, where $n_{0}$ denotes the number of pure nodes for each community. Let all mixed nodes have four different memberships $(0.4, 0.4, 0.2), (0.4, 0.2, 0.4), (0.2, 0.4, 0.4)$ and $(1/3,1/3,1/3)$, each with $\frac{n-3n_{0}}{4}$ number of nodes. The connectivity matrix $P$ and scaling parameter $\rho$ are set independently for each experiment.  Meanwhile, in all numerical studies, the only criteria for choosing $P$ is, $P$ should obey Equation (\ref{definP}), and $P$'s entries should be positive or negative relying on distribution $\mathcal{F}$. Each simulation experiment contains the following steps:

(a) Set $\Omega=\rho\Pi P\Pi'$.

(b) Let $A(i,j)$ be a random number generated from distribution $\mathcal{F}$ with expectation $\Omega(i,j)$ for $1\leq i<j\leq n$, set $A(j,i)=A(i,j)$ to make $A$ symmetric, and let $A$'s diagonal elements be zero since we do not consider self-edges.

(c) Apply DFSP and its competitors to $A$ with $K$ communities. Record Hamming error and Relative error.

(d) Apply KDFSP and its competitors to $A$. Record the estimated number of communities.

(e) Repeat (b)-(d) 100 times, and report the averaged Hamming error and Relative error, and the Accuracy rate.
\begin{rem}
For simplicity, we do not consider missing edges in our simulation study. Actually, similar to numerical results in \cite{qing2023DFM}, DFSP performs better as sparsity parameter $p$ increases when missing edges are generated from the Erd\"os-R\'enyi random graph $G(n,p)$.
\end{rem}
\subsubsection{Normal distribution}
Set $\mathcal{F}$ as a Normal distribution such that $A(i,j)\sim\mathrm{Normal}(\Omega(i,j),\sigma^{2}_{A})$ for some $\sigma^{2}_{A}$. Set $P$ as
\[P_{1}=\begin{bmatrix}
    1&-0.2&-0.3\\
    -0.2&0.9&0.3\\
    -0.3&0.3&0.9\\
\end{bmatrix}.\]
Let $\sigma^{2}_{A}=2$, and $\rho$ range in $\{5,10,\ldots,100\}$. The results are displayed in Panels (a)-(c) of Figure \ref{SIM}. For the task of community detection, we see that DFSP performs better as $\rho$ increases and this is consistent with our analysis in Example \ref{NormalF}. DFSP and GeoNMF perform similarly and both methods significantly outperform SVM-cD and OCCAM. For the task of inferring $K$, KDFSP performs slightly poorer than NB and BHac, and all methods perform better when $\rho$ increases.
\subsubsection{Bernoulli distribution}
Set $\mathcal{F}$ as a Bernoulli distribution such that $A(i,j)\sim\mathrm{Bernoulli}(\Omega(i,j))$. Set $P$ as
\[P_{2}=\begin{bmatrix}
    1&0.2&0.3\\
    0.2&0.9&0.3\\
    0.3&0.3&0.9\\
\end{bmatrix}.\]
Let $\rho$ range in $\{0.05,0.1,\ldots,1\}$. The results are shown in Panels (d)-(f) of Figure \ref{SIM}. For the task of community detection, we see that DFSP's error rates decrease when $\rho$ increases and this is consistent with our analysis in Example \ref{BernoulliF}. DFSP, GeoNMF, and SVM-cD enjoy similar performances and they perform better than OCCAM. For the task of inferring $K$, KDFSP outperforms its competitors, and all methods perform better when $\rho$ increases.
\subsubsection{Poisson distribution}
Set $\mathcal{F}$ as a Poisson distribution such that $A(i,j)\sim\mathrm{Poisson}(\Omega(i,j))$. Set $P$ as $P_{2}$. Let $\rho$ range in $\{0.2,0.4,\ldots,4\}$. Panels (g)-(i) of Figure \ref{SIM} show the results. For the task of community detection, we see that DFSP performs better when $\rho$ increases, which is consistent with our analysis in Example \ref{PoissonF}. Meanwhile, DFSP, GeoNMF, and SVM-cD have similar performances and they significantly outperform OCCAM. For the task of inferring $K$, KDFSP, NB, and BHac enjoy similar performances and all methods perform better when $\rho$ increases.
\subsubsection{Uniform distribution}
Set $\mathcal{F}$ as an Uniform distribution such that $A(i,j)\sim\mathrm{Uniform}(\Omega(i,j))$. Set $P$ as $P_{2}$. Let $\rho$ range in $\{1,2,\ldots,20\}$. Panels (j)-(l) of Figure \ref{SIM} show the results. For the task of community detection, we see that DFSP's error rates have no significant change when $\rho$ increases and this is consistent with our analysis in Example \ref{UniformF}. Meanwhile, DFSP and GeoNMF perform similarly, and they significantly outperform SVM-cD and OCCAM. For the task of determining $K$, KDFSP estimates $K$ correctly while it is interesting to see that NB and BHac fail to infer $K$ when $\rho$ increases.
\subsubsection{Signed network}
For a signed network when $\mathbb{P}(A(i,j)=1)=\frac{1+\Omega(i,j)}{2}$ and $\mathbb{P}(A(i,j)=-1)=\frac{1-\Omega(i,j)}{2}$, let $n=800$, each community have $n_{0}=200$ pure nodes, and $P$ be $P_{2}$. Let $\rho$ range in $\{0.05,0.1,\ldots,1\}$. Panels (m)-(o) of Figure \ref{SIM} display the results. For the task of community detection, we see that all methods enjoy similar behaviors and they perform better when $\rho$ increases, and this is consistent with our analysis in Example \ref{SignedF}. For the task of estimating $K$, KDFSP performs better when $\rho$ increases while NB and BHac fail to infer $K$.
\begin{figure}
\centering
\resizebox{\columnwidth}{!}{
\subfigure[Normal distribution]{\includegraphics[width=0.33\textwidth]{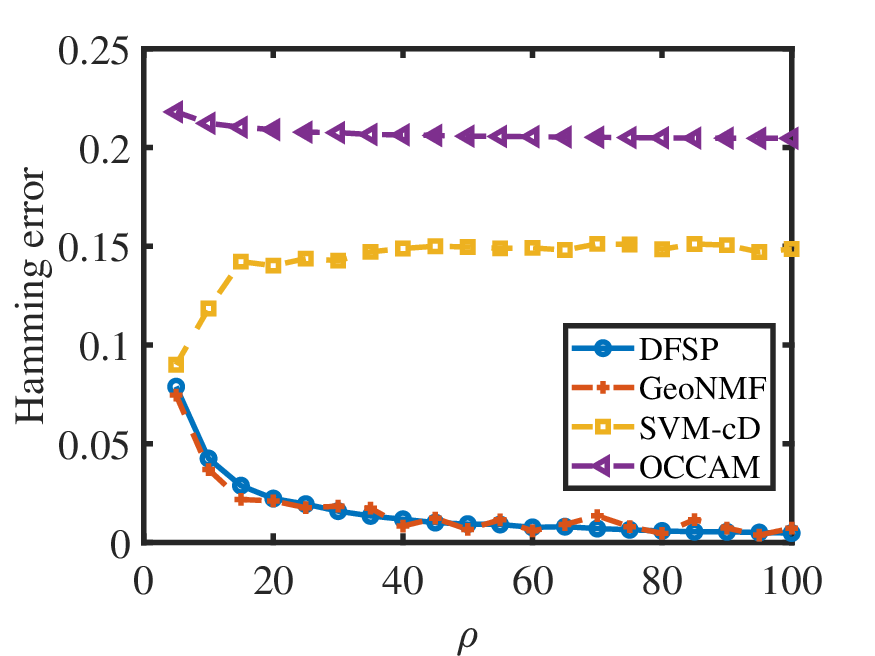}}
\subfigure[Normal distribution]{\includegraphics[width=0.33\textwidth]{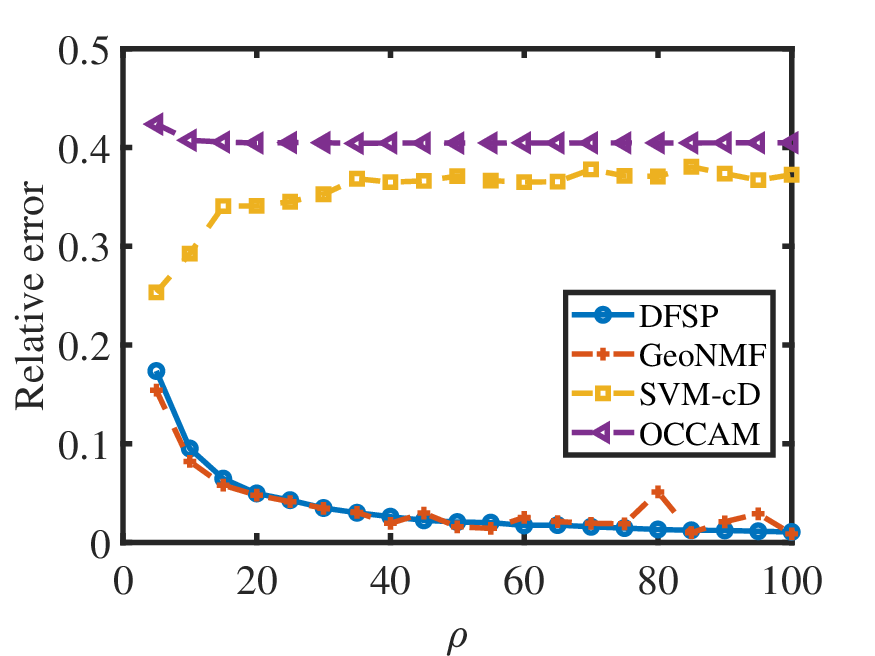}}
\subfigure[Normal distribution]{\includegraphics[width=0.33\textwidth]{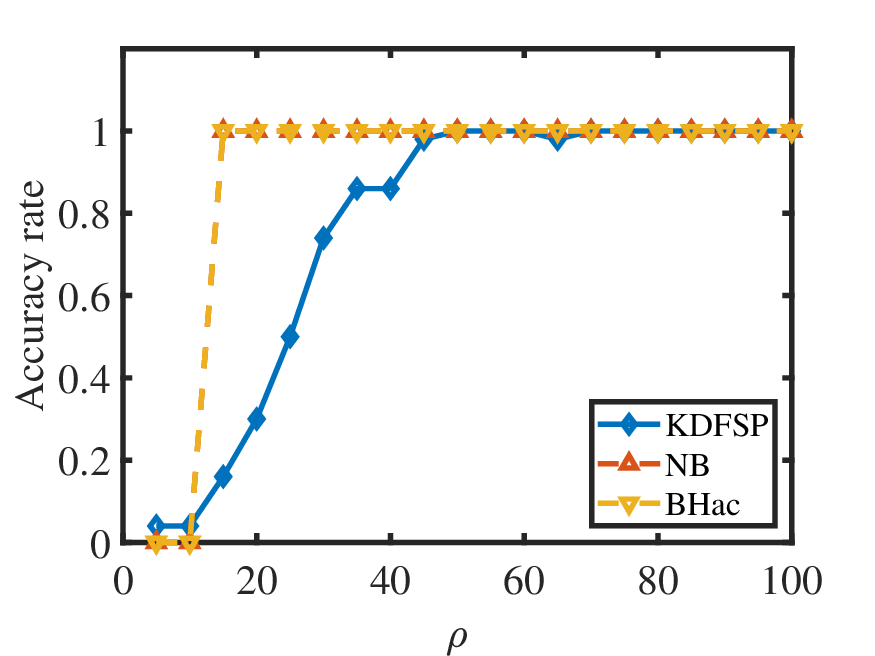}}
}
\resizebox{\columnwidth}{!}{
\subfigure[Bernoulli distribution]{\includegraphics[width=0.33\textwidth]{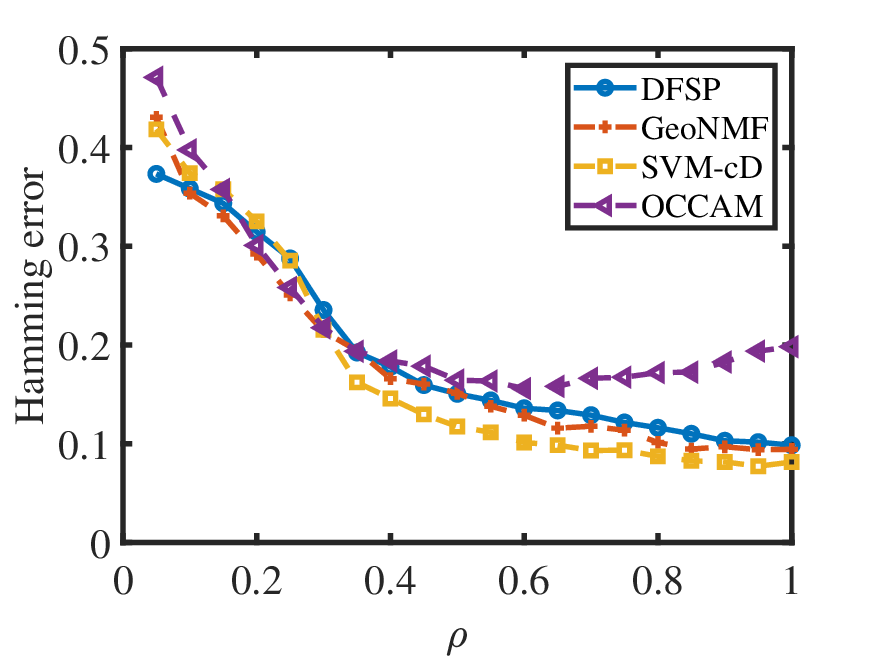}}
\subfigure[Bernoulli distribution]{\includegraphics[width=0.33\textwidth]{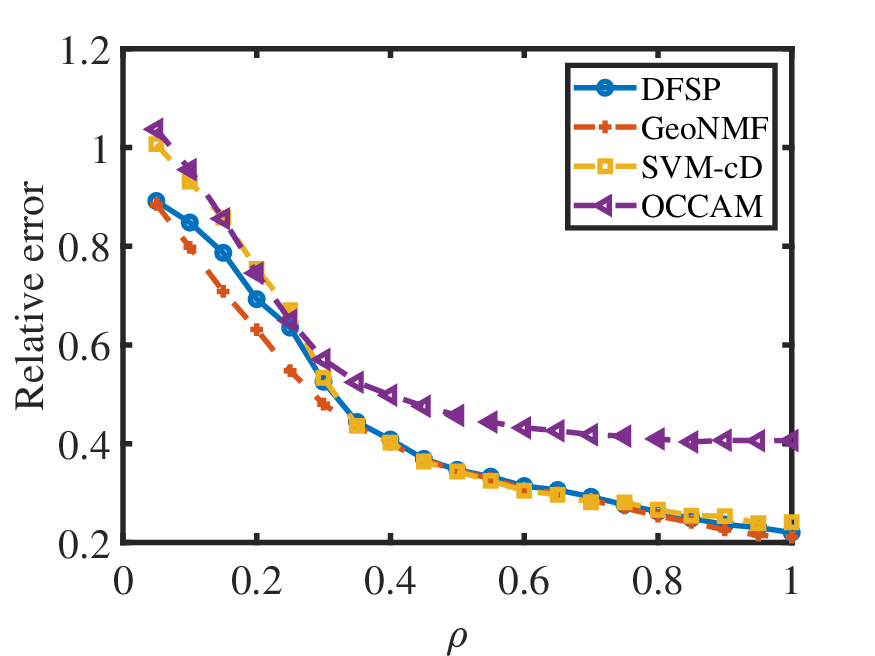}}
\subfigure[Bernoulli distribution]{\includegraphics[width=0.33\textwidth]{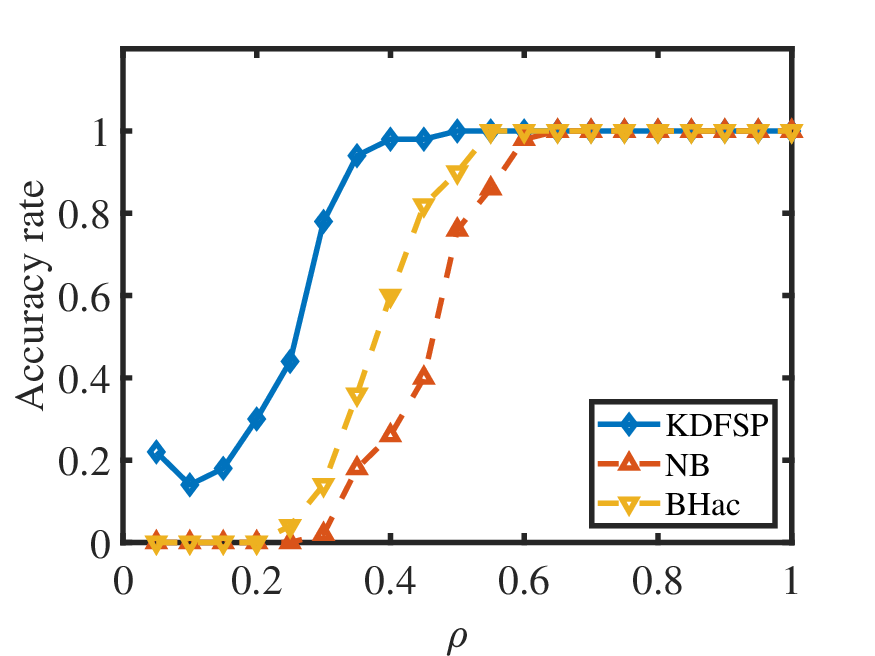}}
}
\resizebox{\columnwidth}{!}{
\subfigure[Poisson distribution]{\includegraphics[width=0.33\textwidth]{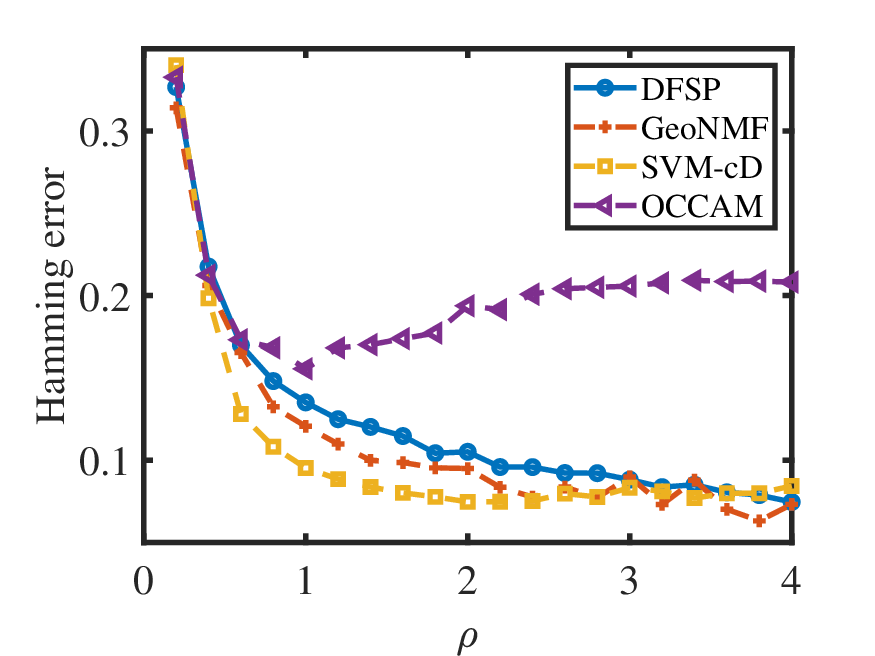}}
\subfigure[Poisson distribution]{\includegraphics[width=0.33\textwidth]{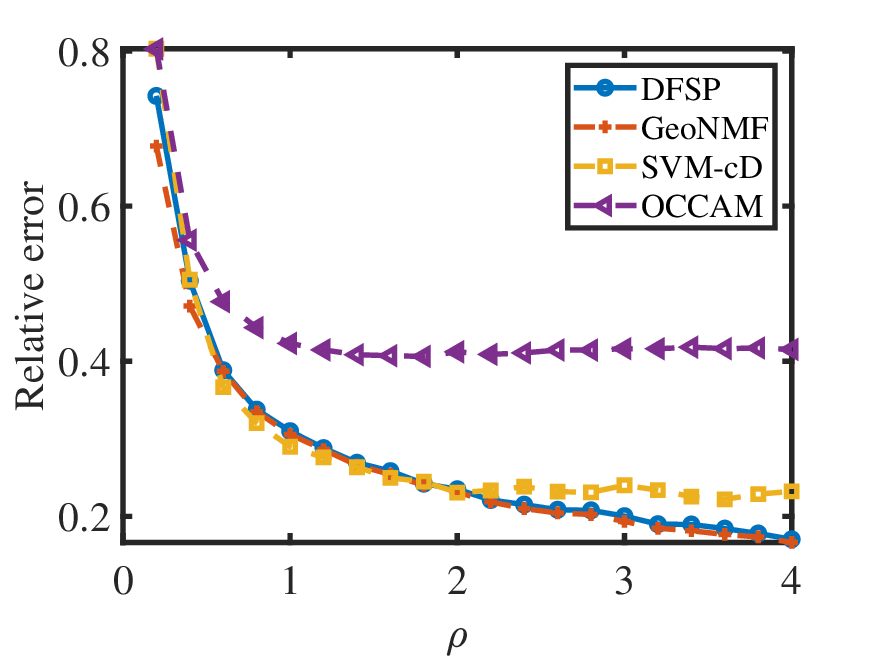}}
\subfigure[Poisson distribution]{\includegraphics[width=0.33\textwidth]{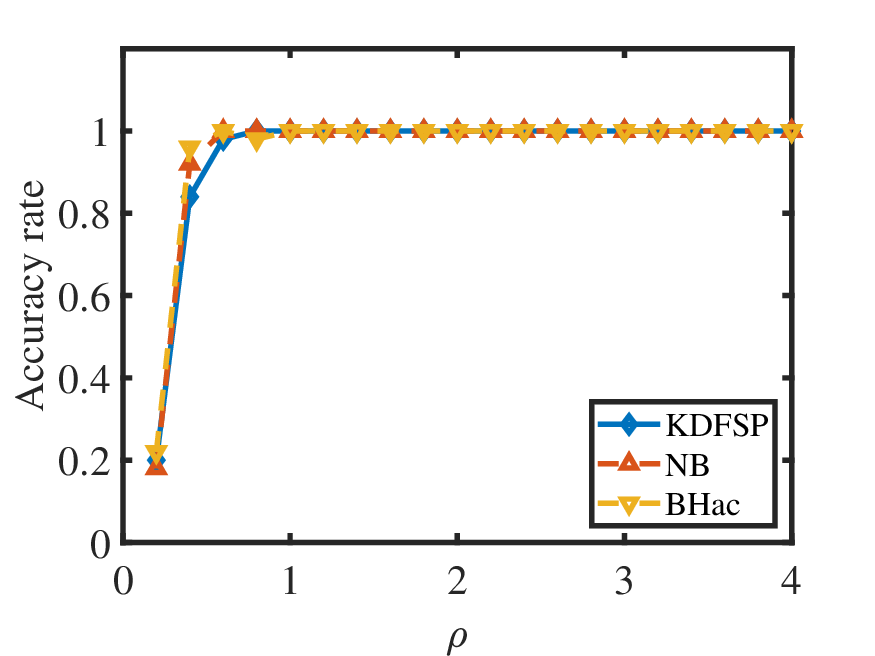}}
}
\resizebox{\columnwidth}{!}{
\subfigure[Uniform distribution]{\includegraphics[width=0.33\textwidth]{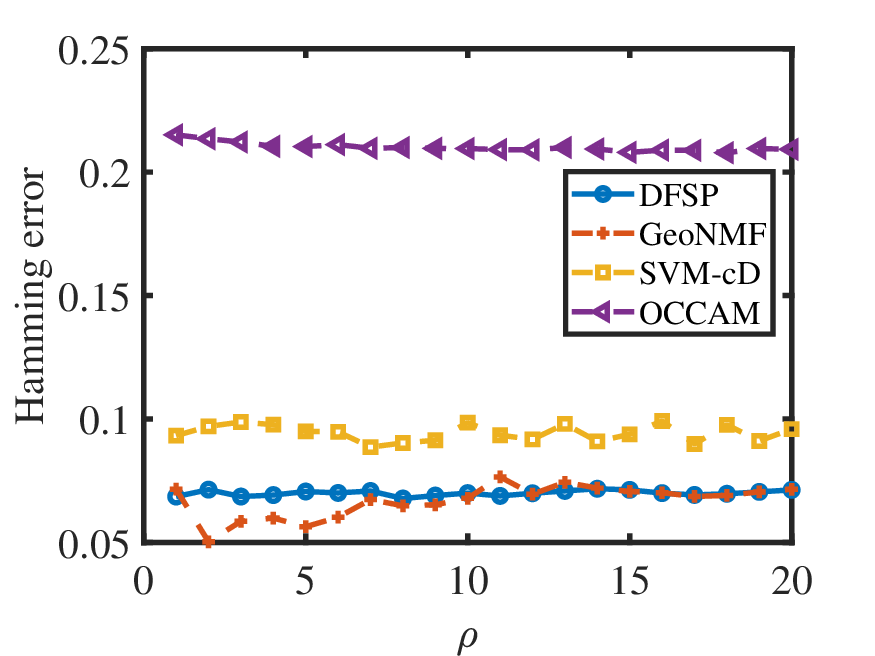}}
\subfigure[Uniform distribution]{\includegraphics[width=0.33\textwidth]{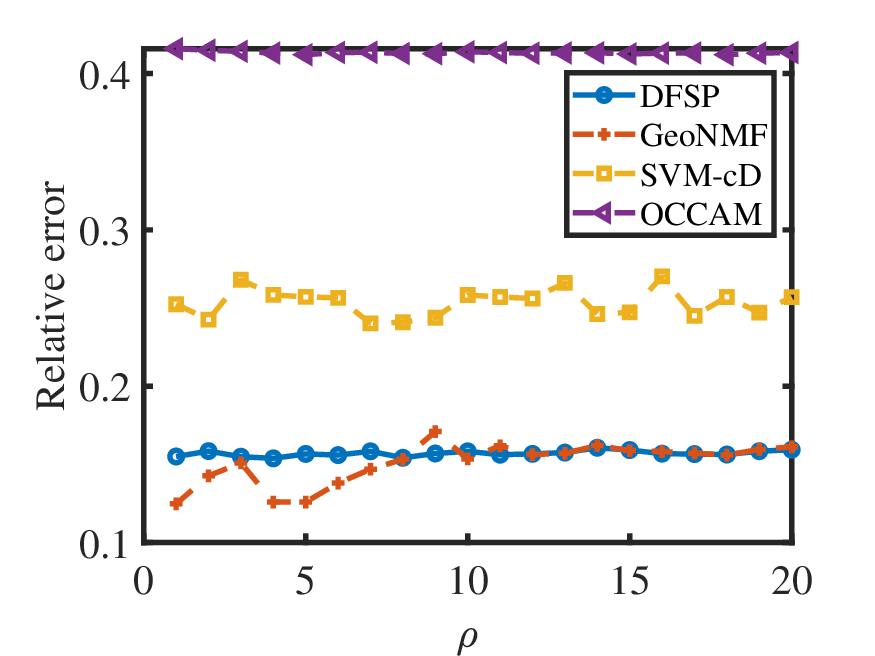}}
\subfigure[Uniform distribution]{\includegraphics[width=0.33\textwidth]{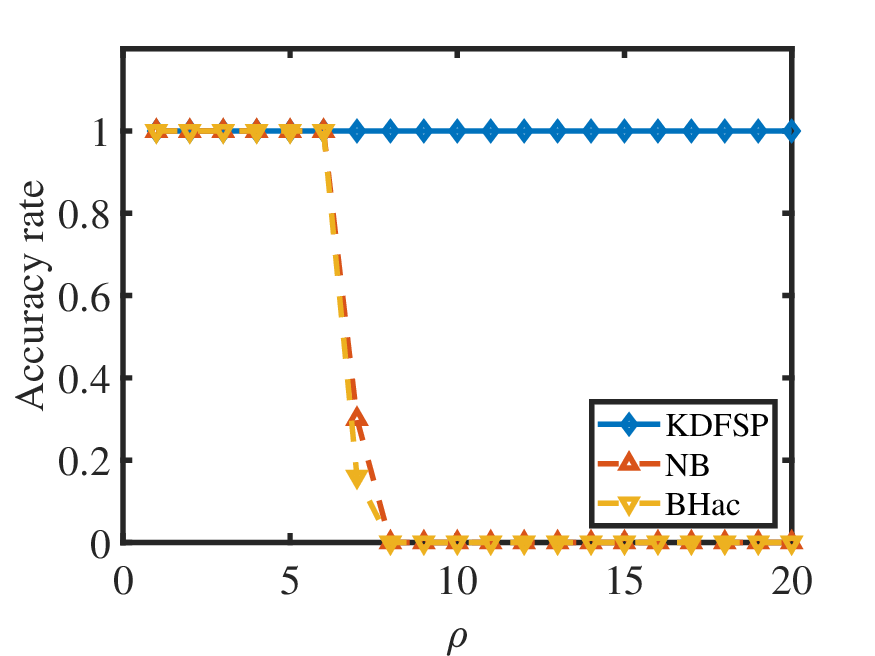}}
}
\resizebox{\columnwidth}{!}{
\subfigure[Signed network]{\includegraphics[width=0.33\textwidth]{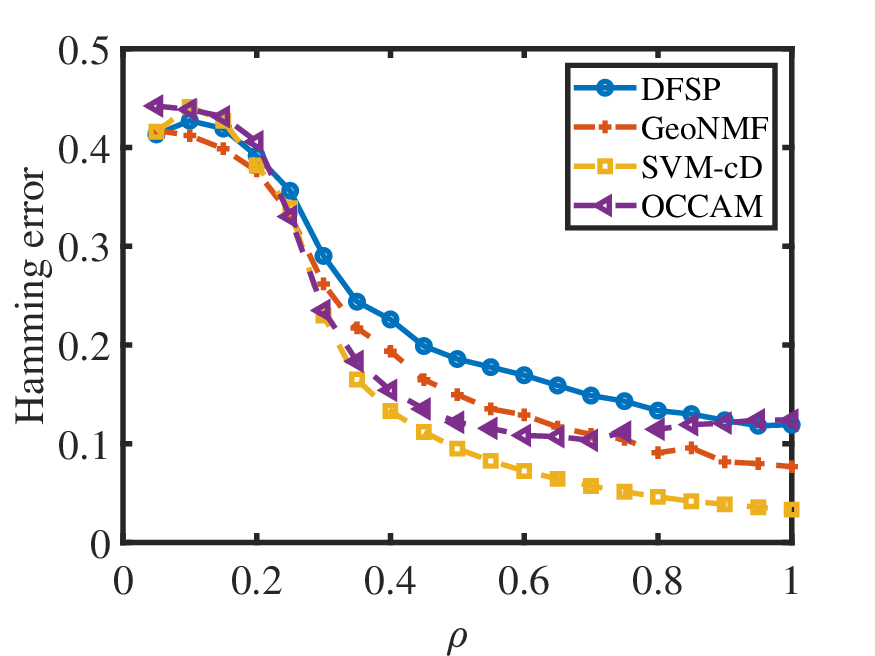}}
\subfigure[Signed network]{\includegraphics[width=0.33\textwidth]{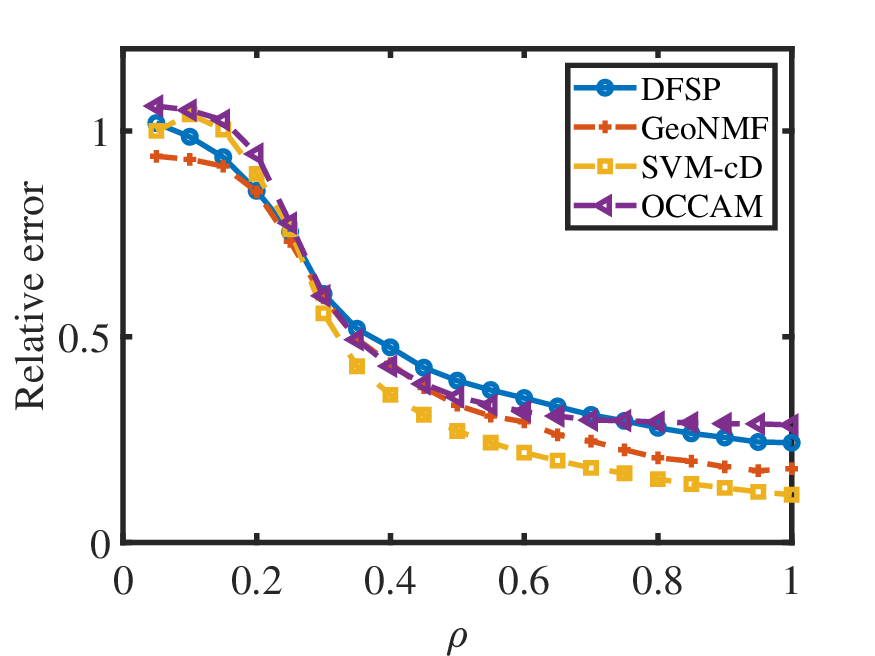}}
\subfigure[Signed network]{\includegraphics[width=0.33\textwidth]{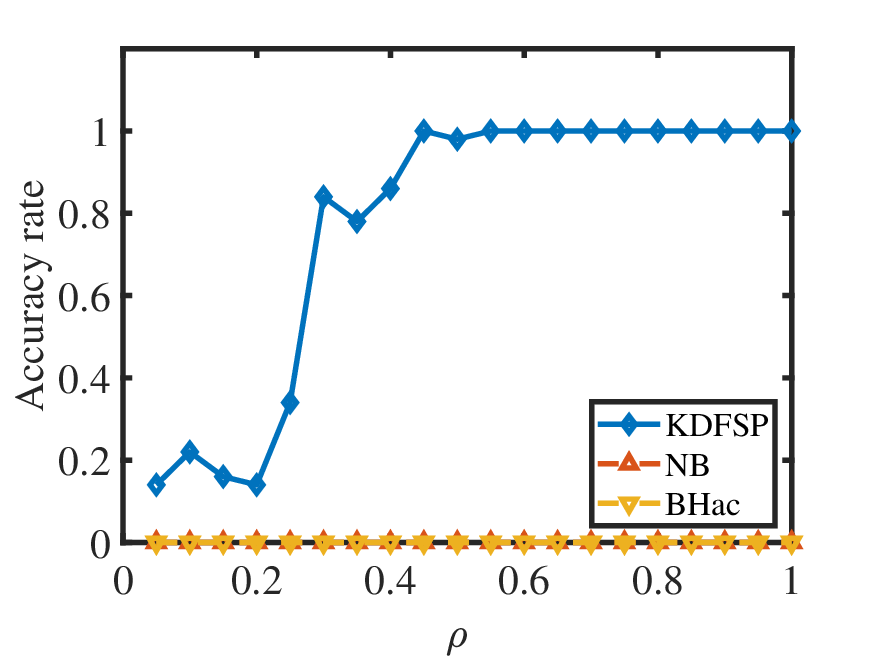}}
}
\caption{Numerical results for simulations.}
\label{SIM} 
\end{figure}
\subsection{Application to real-world networks}\label{RealWeightedData}
We use nine real-world networks to demonstrate the effectiveness of DFSP and KDFSP. Table \ref{realdata} summarizes basic information for these networks. For visualization, Figure \ref{AReal} displays adjacency matrices of the first three weighted networks.
\begin{table}[htp!]
	\tiny
\caption{Basic information and summarized statistics of real-world networks studied in this paper.}\label{realdata}
	\begin{tabular}{@{}lllllllllll@{}}
\toprule
\textbf{Dataset}&\textbf{Source}&\textbf{Node meaning}&\textbf{Edge meaning}&\textbf{Weighted?}&\textbf{True $\Pi$}&\boldmath{$n$}&\boldmath{$K$}\\
\midrule
Gahuku-Gama subtribes&\cite{read1954cultures}&Tribe&Friendship &Yes&Known&16&3\\
Karate-club-weighted&\cite{zachary1977information}&Member&Tie&Yes&Known&34&2\\
Slovene Parliamentary Party&\cite{ferligoj1996analysis}&Party&Relation&Yes&Unknown&10&2\\
Train bombing&\cite{hayes2006connecting}&Terrorist&Contact&Yes&Unknown&64&Unknown\\
Les Mis\'erables&\cite{knuth1993stanford}&Character&Co-occurence&Yes&Unknown&77&Unknown\\
US Top-500 Airport Network&\cite{colizza2007reaction}&Airport&\#Seats&Yes&Unknown&500&Unknown\\
Political blogs&\cite{Polblogs1}&Blog&Hyperlink&No&Known&1222&2\\
US airports&\cite{opsahl2011anchorage}&Airport&\#Flights&Yes&Unknown&1572&Unknown\\
cond-mat-1999&\cite{newman2001structure}&Scientist&Coauthorship&Yes&Unknown&13861&Unknown\\
\botrule
\end{tabular}
\end{table}

In Table \ref{realdata}, for networks with known memberships or $K$, their ground truth and $K$ are suggested by the original authors or data curators. For the Gahuku-Gama subtribes network, it can be downloaded from \url{http://konect.cc/networks/ucidata-gama/} and its node labels are shown in Figure 9 (b) \cite{yang2007community}. For the Karate-club-weighted network, it can be downloaded from \url{http://vlado.fmf.uni-lj.si/pub/networks/data/ucinet/ucidata.htm#kazalo} and its true node labels can be downloaded from \url{http://websites.umich.edu/~mejn/netdata/}. For the Slovene Parliamentary Party network, it can be downloaded from \url{http://vlado.fmf.uni-lj.si/pub/networks/data/soc/Samo/Stranke94.htm}.  For Train bombing, Les Mis\'erables, and US airports, they can be downloaded from \url{http://konect.cc/networks/} (see also \cite{kunegis2013konect}). The original US airports network has 1574 nodes and it is directed. We make it undirected by letting the weight of an edge be the summation of the number of flights between two airports. We then remove two airports that have no connections with any other airport. For US Top-500 Airport Network, it can be downloaded from \url{https://toreopsahl.com/datasets/#online_social_network}. For Political blogs, its adjacency matrix and true node labels can be downloaded from \url{http://zke.fas.harvard.edu/software/SCOREplus/Matlab/datasets/}. For the Condensed matter collaborations 1999 (Cond-mat-1999 for short) data, it can be downloaded from \url{http://websites.umich.edu/~mejn/netdata/}. Cond-mat-1999 has 16726 nodes and only 13861 nodes fall in the largest connected component which is the one we focus on in this paper.

\begin{figure}
\centering
\resizebox{\columnwidth}{!}{
\subfigure[Gahuku-Gama subtribes]{\includegraphics[width=0.35\textwidth]{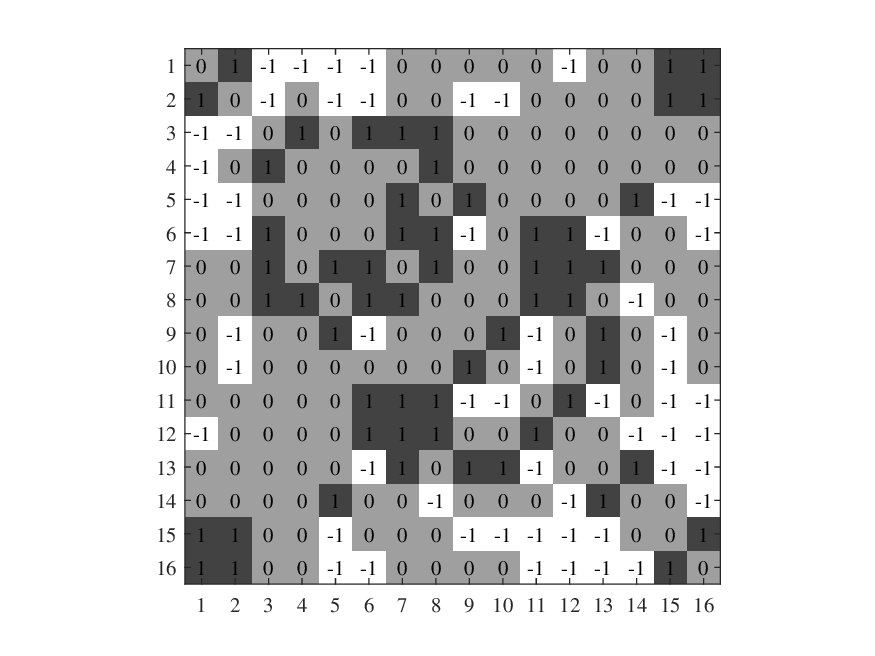}}
\subfigure[Karate-club-weighted]{\includegraphics[width=0.35\textwidth]{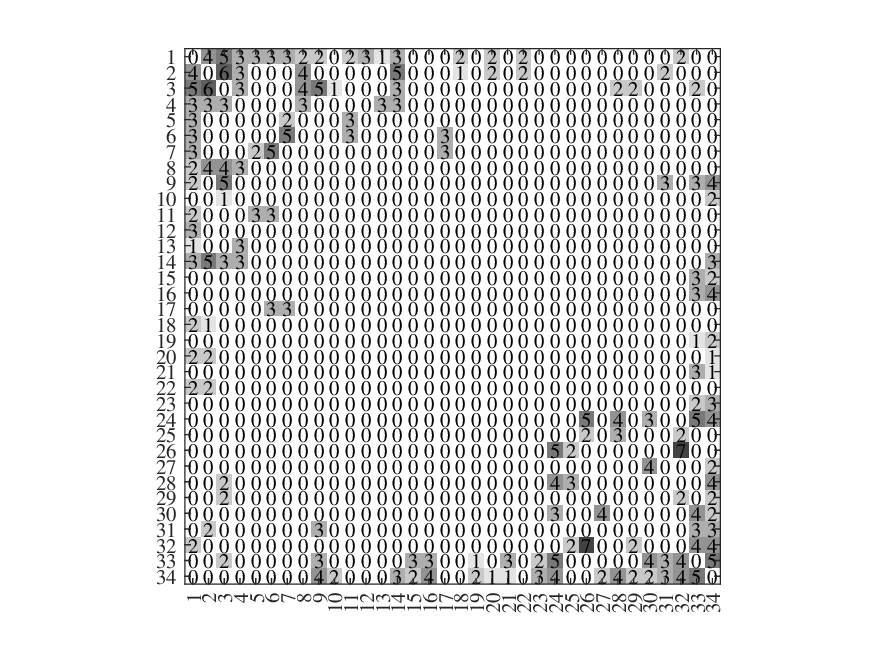}}
\subfigure[Slovene Parliamentary Party]{\includegraphics[width=0.35\textwidth]{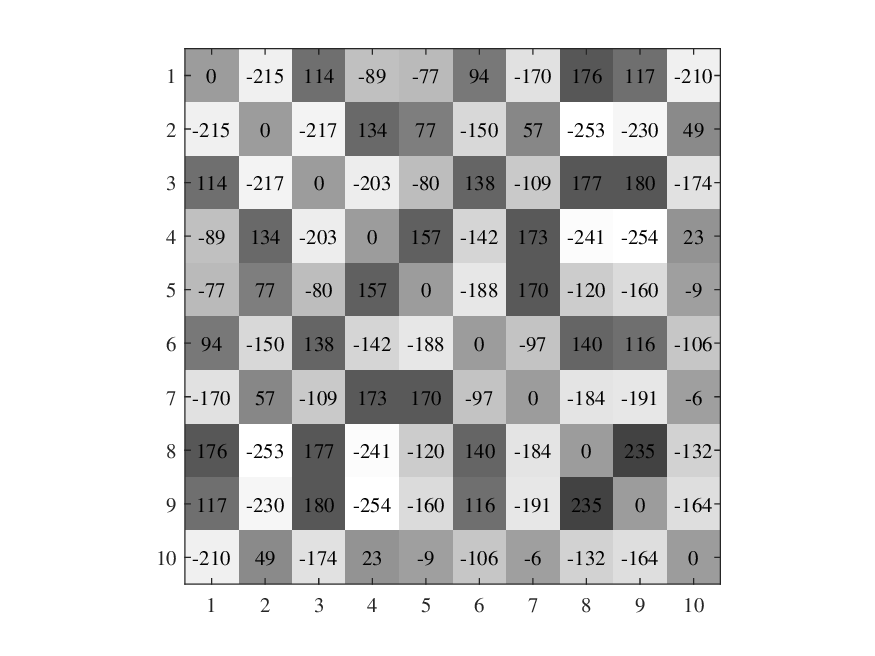}}
}
\caption{Adjacency matrices of Gahuku-Gama subtribes, Karate-club-weighted, and Slovene Parliamentary Party.}
\label{AReal} 
\end{figure}

Let $\hat{c}$ be the home base community vector computed by Remark \ref{nonoverlapping}. By comparing $\hat{c}$ with the true labels, for the Gahuku-Gama subtribes network, we find that DFSP misclusters 0 nodes out of 16; for the Karate-club-weighted network, DFSP misclusters 0 nodes out of 34; for the Political blogs network, DFSP misclusters 64 nodes out of 1222.

Table \ref{realdataEstimatedK} records the estimated number of communities of KDFSP and its competitors for real-world networks used in this paper. The results show that, for networks with known $K$, our KDFSP correctly determines the number of communities for these networks while NB and BHac fail to determine the correct $K$. For the first four networks with unknown $K$ in Table \ref{realdataEstimatedK}, our KDFSP determines their $K$ as 2 while $K$ inferred by NB and BHac is larger.For Cond-mat-1999, KDFSP chooses its $K$ as 38, a number much smaller than those determined by NB and BHac. Figure \ref{Qdfsp} shows the fuzzy weighted modularity $Q$ by the DFSP approach for different choices of the number of clusters. For each network, the number of clusters $k$ maximizing the fuzzy weighted modularity can be found clearly in Figure \ref{Qdfsp}.
\begin{table}[h!]
\footnotesize
	\centering
	\caption{Comparison of the estimated number of communities for real-world networks in Table \ref{realdata}.}
	\label{realdataEstimatedK}
	\begin{tabular}{cccccccccccc}
\toprule
\textbf{Dataset}&\textbf{True} \boldmath{$K$}&\textbf{KDFSP}&\textbf{NB}&\textbf{BHac}\\
\midrule
Gahuku-Gama subtribes&3&3&1&13\\
Karate-club-weighted&2&2&4&4\\
Slovene Parliamentary Party&2&2&N/A&N/A\\
Train bombing&Unknown&2&3&4\\
Les Mis\'erables&Unknown&2&6&7\\
US Top-500 Airport Network&Unknown&2&147&158\\
Political blogs&2&2&7&8\\
US airports&Unknown&2&100&137\\
Cond-mat-1999&Unknown&38&449&489\\
\botrule
\end{tabular}
\end{table}

From now on, we use the number of communities determined by KDFSP in Table \ref{realdataEstimatedK} for each data to estimate community memberships. We compare the fuzzy weighted modularity of DFSP and its competitors, and the results are displayed in Table \ref{Qalgorithms}. We see that DFSP returns larger fuzzy weighted modularity than its competitors except for the Karate-club-weighted network. Meanwhile, according to the fuzzy weighted modularity of DFSP in Table \ref{Qalgorithms}, we also find that Gahuku-Gama subtribes, Karate-club-weighted, Slovene Parliamentary Party, Les Mis\'erables, and Political blogs have a more clear community structure than Train bombing, US Top-500 Airport Network, US airports, and Cond-mat-1999 for their larger fuzzy weighted modularity. Furthermore, the running times of DFSP, GeoNMF, SVM-cD, and OCCAM for the Cond-mat-1999 network are 29.06 seconds, 32.33 seconds, 90.63 seconds, and 300 seconds, respectively. Hence, DFSP runs faster than its competitors.
\begin{figure}
\centering
\normalsize
\subfigure[Gahuku-Gama subtribes]{\includegraphics[width=0.323\textwidth]{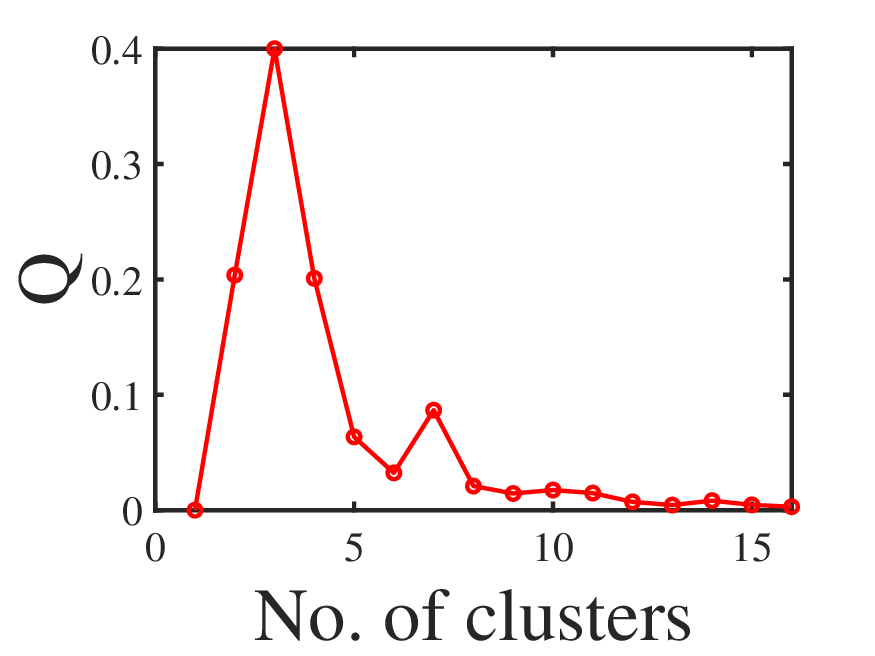}}
\subfigure[Karate-club-weighted]{\includegraphics[width=0.323\textwidth]{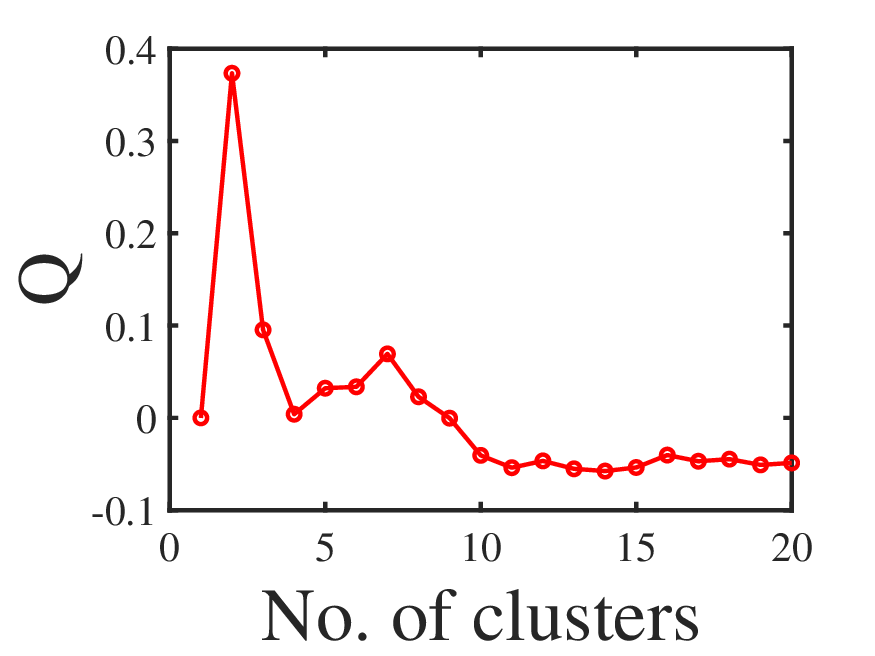}}
\subfigure[Slovene Parliamentary Party]{\includegraphics[width=0.323\textwidth]{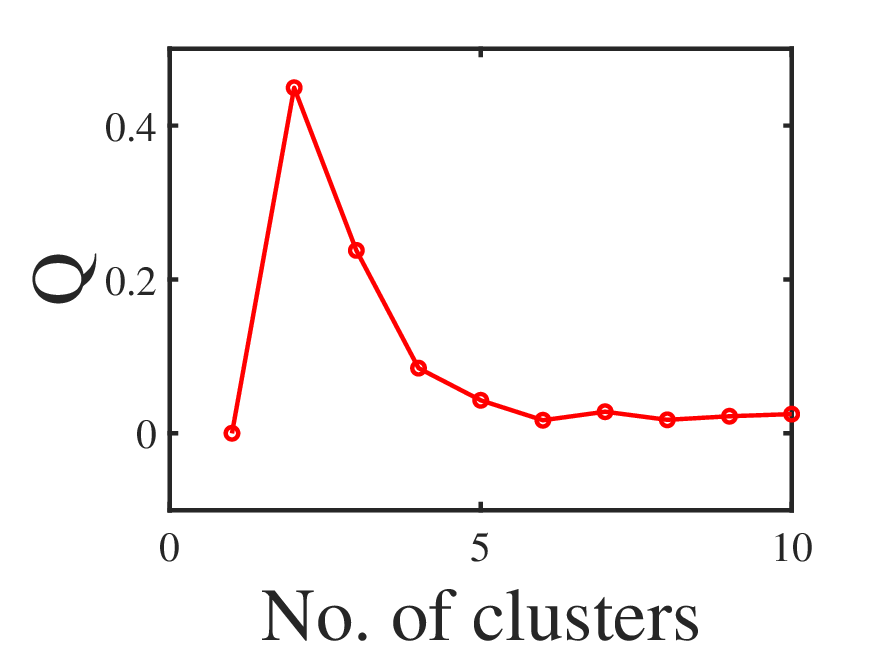}}
\subfigure[Train bombing]{\includegraphics[width=0.323\textwidth]{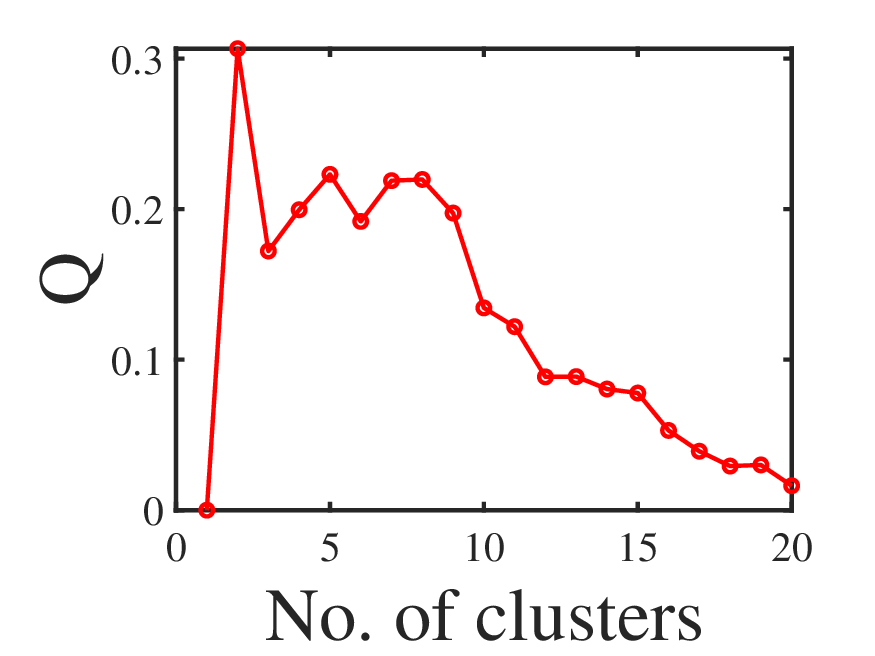}}
\subfigure[Les Mis\'erables]{\includegraphics[width=0.323\textwidth]{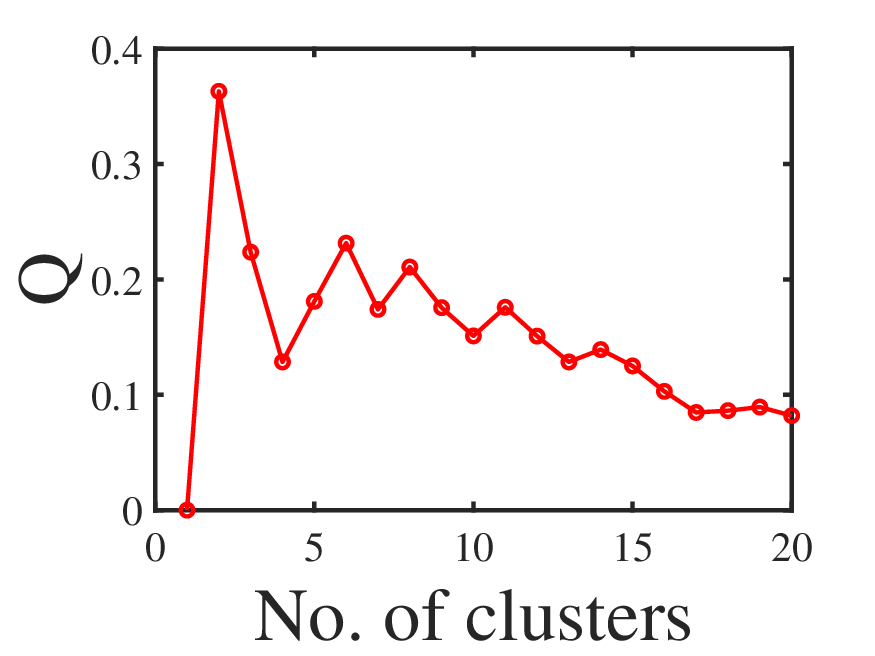}}
\subfigure[US Top-500 Airport Network]{\includegraphics[width=0.323\textwidth]{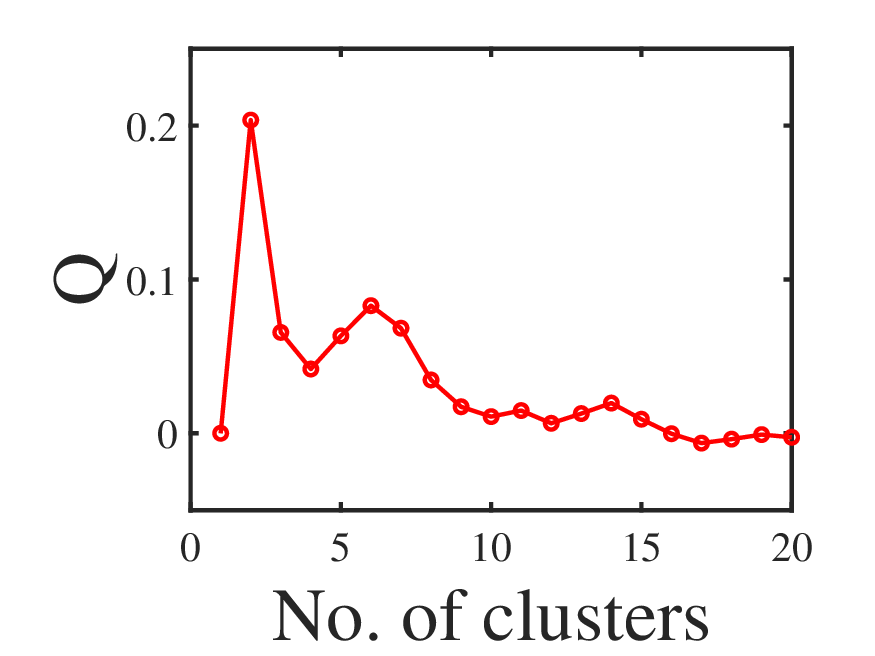}}
\subfigure[Political blogs]{\includegraphics[width=0.323\textwidth]{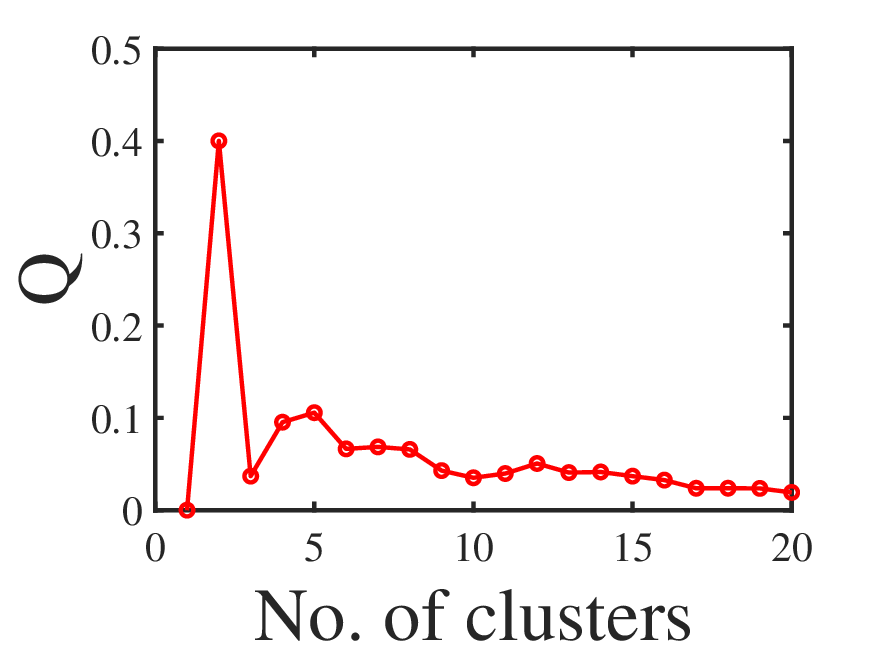}}
\subfigure[US airports]{\includegraphics[width=0.323\textwidth]{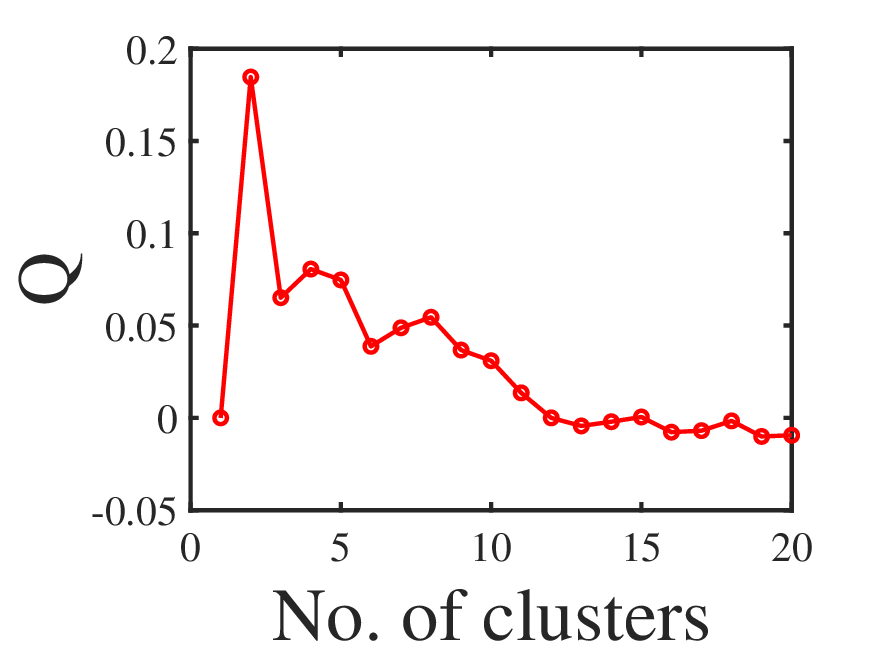}}
\subfigure[Cond-mat-1999]{\includegraphics[width=0.323\textwidth]{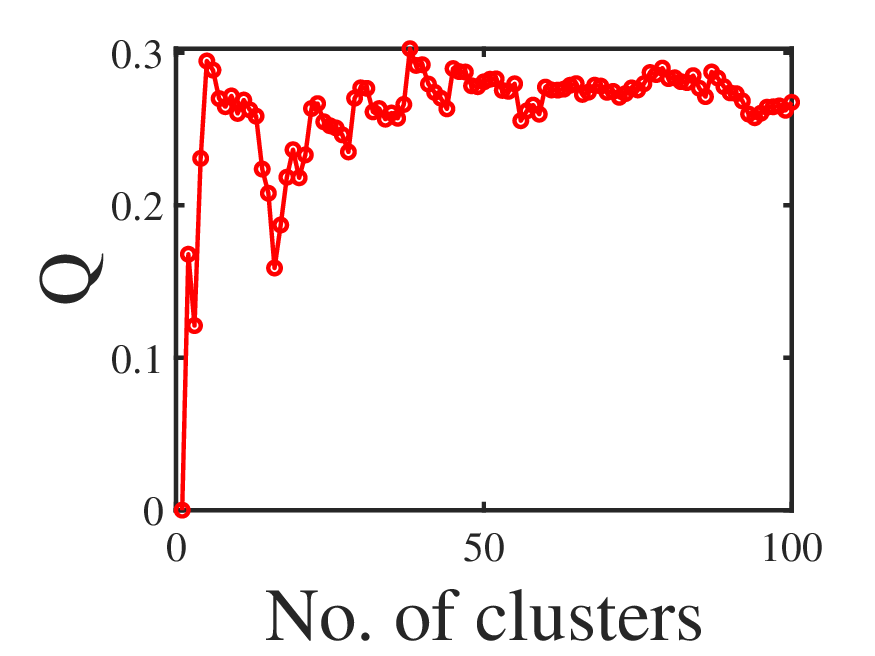}}
\caption{Fuzzy weighted modularity $Q$ computed by Equation (\ref{Modularity}) against the number of clusters by DFSP for real-world networks in Table \ref{realdata}.}
\label{Qdfsp} 
\end{figure}

\begin{table}[h!]
\footnotesize
	\centering
	\caption{Fuzzy weighted modularity $Q$ computed by Equation (\ref{Modularity}) of DFSP and its competitors for real-world networks used in this paper.}
	\label{Qalgorithms}
	\begin{tabular}{cccccccccccc}
\toprule
\textbf{Dataset}&\textbf{DFSP}&\textbf{GeoNMF}&\textbf{SVM-cD}&\textbf{OCCAM}\\
\midrule
Gahuku-Gama subtribes&\textbf{0.4000}&0.1743&0.1083&0.2574\\
Karate-club-weighted&0.3734&\textbf{0.3885}&0.3495&0.3685\\
Slovene Parliamentary Party&\textbf{0.4492}&0.3641&0.3678&0.4488\\
Train bombing&\textbf{0.3066}&0.2578&0.2604&0.2867\\
Les Mis\'erables&\textbf{0.3630}&0.3608&0.3014&0.3407\\
US Top-500 Airport Network&\textbf{0.2036}&0.1729&0.1345&0.1749\\
Political blogs&\textbf{0.4001}&\textbf{0.4001}&0.3789&0.3953\\
US airports&\textbf{0.1846}&0.1446&0.0453&0.1603\\
Cond-mat-1999&\textbf{0.3026}&0.2295&0.0192&0.2677\\
\botrule
\end{tabular}
\end{table}

To have a better understanding of the community structure for real-world networks, we define the following indices. Call node $i$ a highly mixed node if $\mathrm{max}_{k\in[K]}\hat{\Pi}(i,k)\leq0.7$ and a highly pure node if $\mathrm{max}_{k\in[K]}\hat{\Pi}(i,k)\geq0.9$. Let $\eta_{\mathrm{mixed}}$ be the proportion of highly mixed nodes and $\eta_{\mathrm{pure}}$ be the proportion of highly pure nodes in a network, where $\eta_{\mathrm{mixed}}$ and $\eta_{\mathrm{pure}}$ capture mixedness and purity of a network, respectively. The two indices for real data are displayed in Table \ref{realdataPureMixed} and we have the following conclusions:
\begin{itemize}
   \item For Gahuku-Gama subtribe, it has $16\times0.0625=1$ highly mixed node and $16\times0.8750=14$ highly pure nodes.
   \item For Karate-club-weighted, it has $34\times0.0588\approx2$ highly mixed nodes and $34\times0.7941\approx27$ highly pure nodes.
   \item For Slovene Parliamentary Party, it has 9 highly pure nodes and 0 highly mixed nodes.
   \item For Train bombing, it has $64\times0.0938\approx6$ highly mixed nodes and $64\times0.7969\approx51$ highly pure nodes.
   \item For Les Mis\'erables, it has $77\times0.0130\approx1$ highly mixed node and $77\times0.9351\approx72$ highly pure nodes.
   \item For US Top-500 Airport Network, it has $500\times0.1400=70$ highly mixed nodes and $500\times0.7820=391$ highly pure nodes.
   \item For Political blogs, it has $1222\times0.0393\approx48$ highly mixed nodes and $1222\times0.8781\approx1073$ highly pure nodes.
   \item For US airports, it has $1572\times0.0865\approx136$ highly mixed nodes and $1572\times0.8575\approx1348$ highly pure nodes.
   \item For Con-mat-1999, it has $13861\times0.6305\approx8739$ highly mixed nodes and $13861\times0.1558\approx2160$ highly pure nodes.
 \end{itemize}
\begin{table}[h!]
\footnotesize
	\centering
	\caption{$\eta_{\mathrm{mixed}}$ and $\eta_{\mathrm{pure}}$ obtained from DFSP for real-world networks considered in this paper.}
	\label{realdataPureMixed}
	\begin{tabular}{cccccccccccc}
\toprule
\textbf{Dataset}&\boldmath{$\eta_{\mathrm{mixed}}$}&\boldmath{$\eta_{\mathrm{pure}}$}\\
\midrule
Gahuku-Gama subtribes&0.0625&0.8750\\
Karate-club-weighted&0.0588&0.7941\\
Slovene Parliamentary Party&0&0.9\\
Train bombing&0.0938&0.7969\\
Les Mis\'erables&0.0130&0.9351\\
US Top-500 Airport Network&0.1400&0.7820\\
Political blogs&0.0393&0.8781\\
US airports&0.0865&0.8575\\
Cond-mat-1999&0.6305&0.1558\\
\botrule
\end{tabular}
\end{table}

For visibility, Figure \ref{NetReal} depicts communities returned by DFSP for the first eight small-size networks in Table \ref{realdata}, where we only highlight highly mixed nodes because most nodes are highly pure by Table \ref{realdataPureMixed}.
\begin{figure}
\centering
\normalsize
\subfigure[Gahuku-Gama subtribes]{\includegraphics[width=0.4\textwidth]{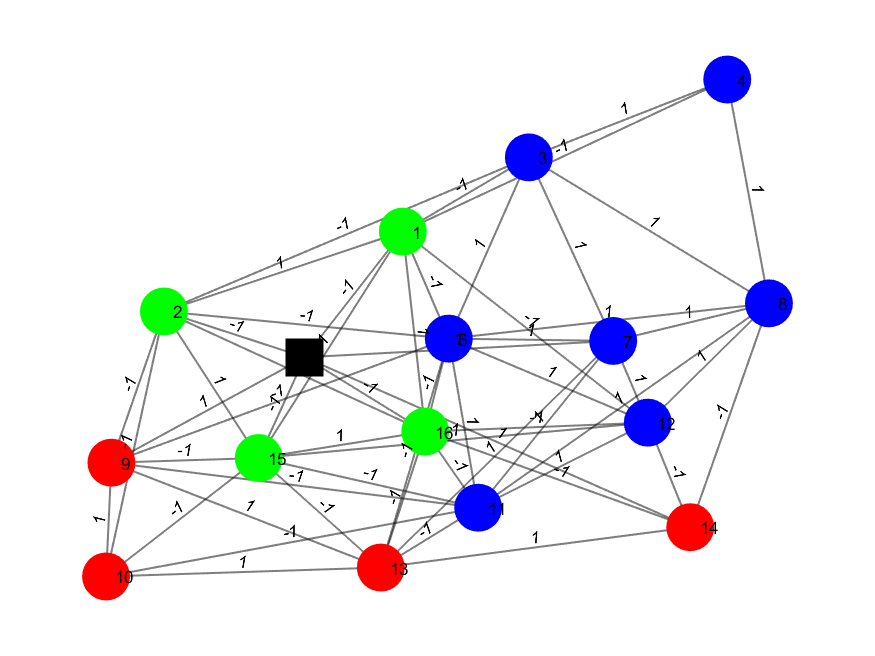}}
\subfigure[Karate-club-weighted]{\includegraphics[width=0.4\textwidth]{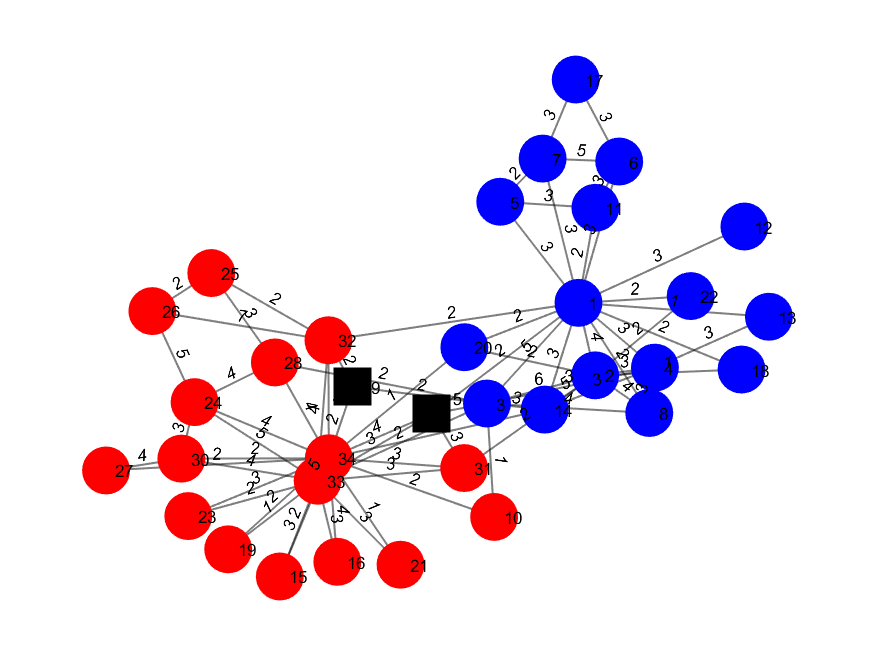}}
\subfigure[Slovene Parliamentary Party]{\includegraphics[width=0.4\textwidth]{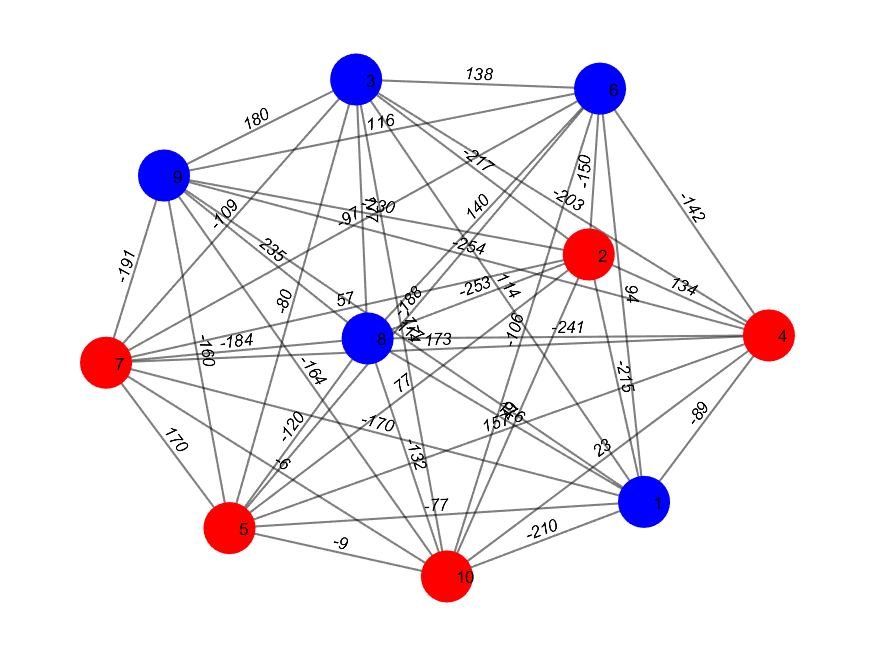}}
\subfigure[Train bombing]{\includegraphics[width=0.4\textwidth]{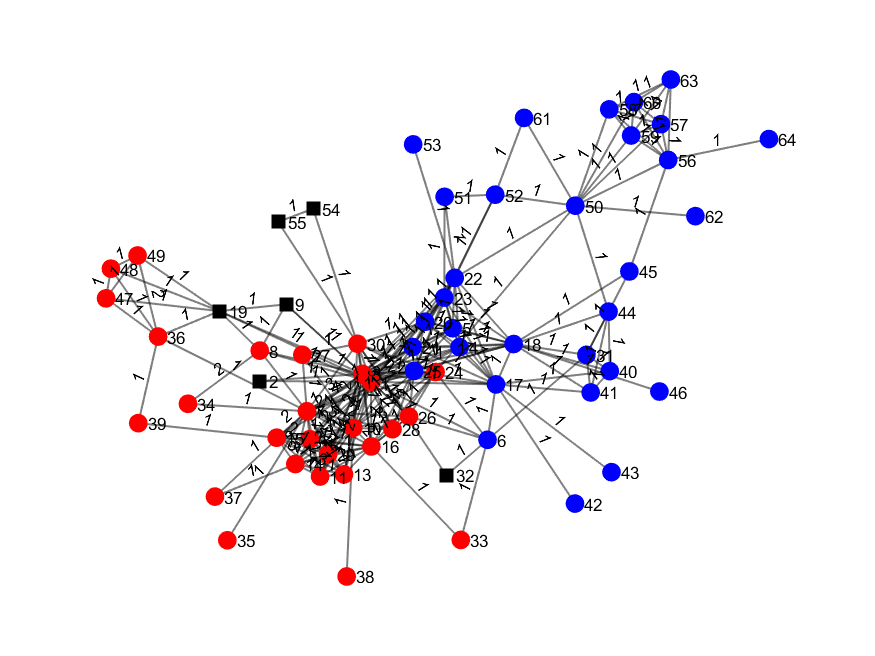}}
\subfigure[Les Mis\'erables]{\includegraphics[width=0.4\textwidth]{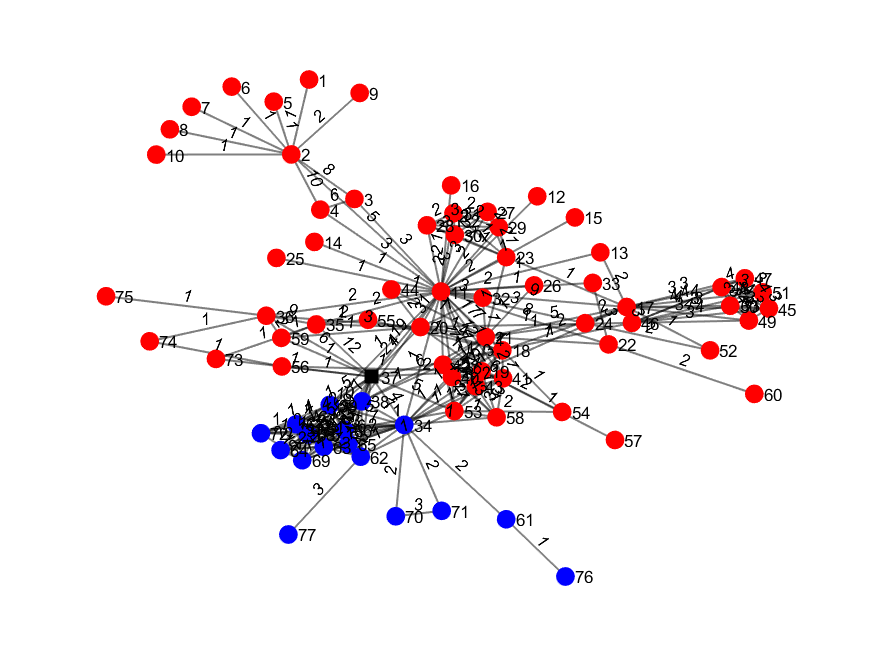}}
\subfigure[US Top-500 Airport Network]{\includegraphics[width=0.4\textwidth]{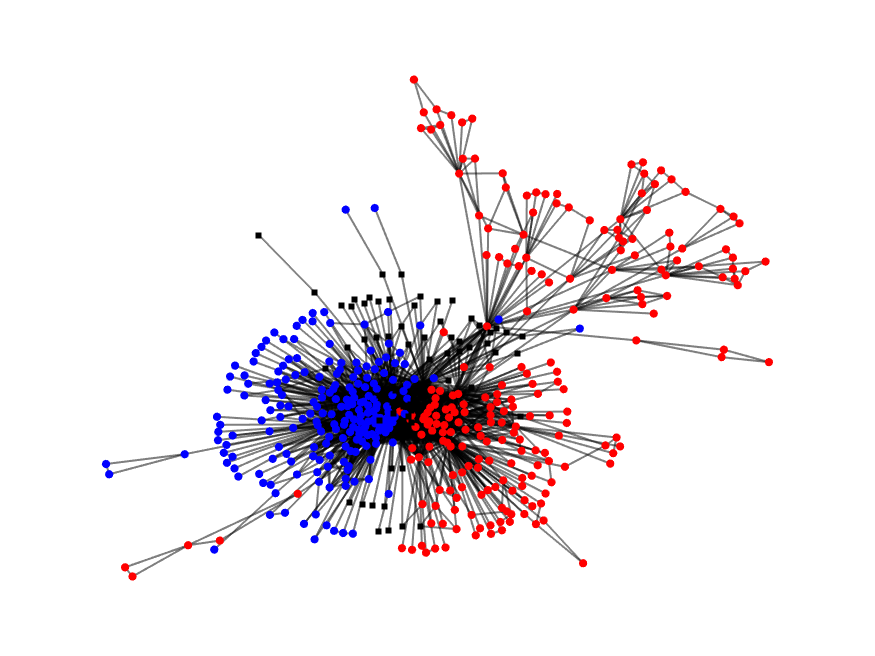}}
\subfigure[Political blogs]{\includegraphics[width=0.4\textwidth]{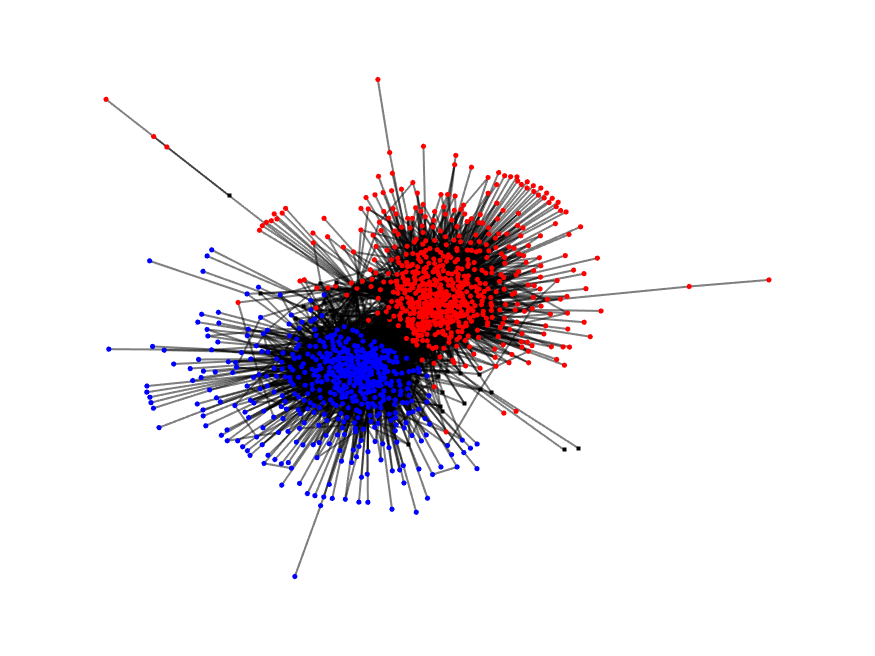}}
\subfigure[US airports]{\includegraphics[width=0.4\textwidth]{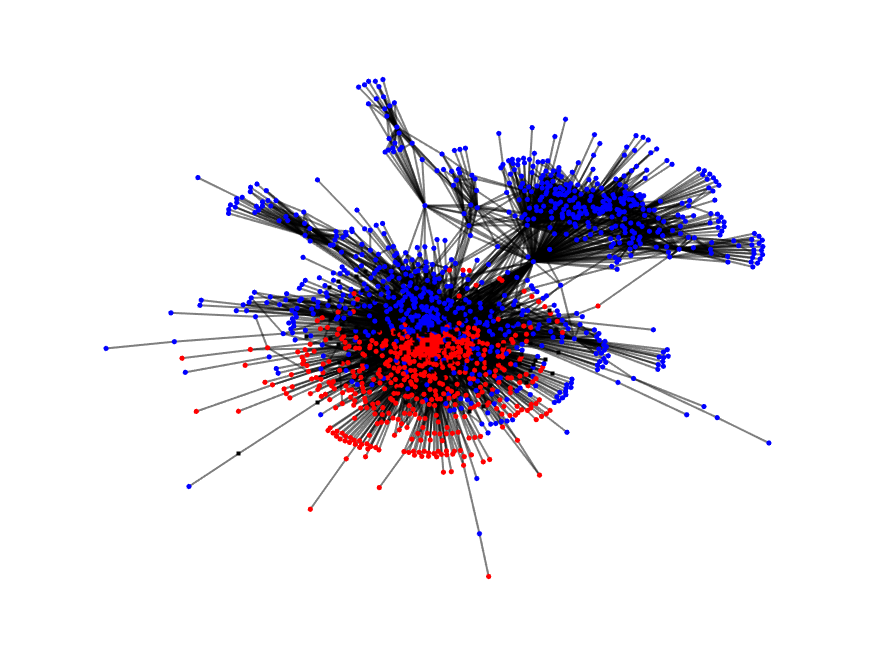}}
\caption{Communities detected by DFSP. Colors indicate communities and the black square indicates highly mixed nodes, where communities are obtained by $\hat{c}$. For visualization, we do not show edge weights and node labels for US Top-500 Airport Network, Political blogs, and US airports.}
\label{NetReal} 
\end{figure}
\section{Conclusion}\label{Conclusion}
In this paper, we have proposed a general, flexible, and identifiable mixed membership distribution-free (MMDF) model to capture community structures of overlapping weighted networks. An efficient spectral algorithm, DFSP, was used to conduct mixed membership community detection and shown to be consistent under mild conditions in the MMDF framework. We have also proposed the fuzzy weighted modularity for overlapping weighted networks. And by maximizing the fuzzy weighted modularity, we can get an efficient estimation of the number of communities for overlapping weighted networks. The advantages of MMDF and fuzzy weighted modularity are validated on both computer-generated and real-world weighted networks. Experimental results demonstrated that DFSP outperforms its competitors in community detection and KDFSP outperforms its competitors in inferring the number of communities.

MMDF is a generative model and fuzzy weighted modularity is a general modularity for overlapping weighted networks. We expect that our model MMDF and fuzzy weighted modularity proposed in this paper will have wide applications in learning and understanding the latent structure of overlapping weighted networks, just as the mixed membership stochastic blockmodels and the Newman-Girvan modularity which have been widely studied in recent years. Future works will be studied from four aspects: the first is studying the variation of MMDF for heterogeneous networks; the second is presenting a rigorous method to determine the number of communities for networks generated from MMDF; the third is developing algorithms to estimate community membership based on our fuzzy weighted modularity; the fourth is developing a general framework of MMDF to deal with large-scale weighted networks. One possible idea is, we can accelerate the eigendecomposition step of the DFSP method by the two randomized sketching techniques (the random projection-based technique and the random sampling-based technique) considered in \cite{zhang2022randomized} to process large-scale networks. Theoretical guarantees for consistent estimation of the accelerated algorithm can also be analyzed under the framework of our MMDF model.
\bmhead{Acknowledgements} Wang’s work was supported by the Fundamental Research Funds for the Central Universities, Nankai Univerity, 63231186 and the National Natural Science Foundation of China (Grant 12001295, 12271272).
\bmhead{Author Contributions} Huan Qing: Conceptualization, Methodology, Investigation, Software, Formal analysis, Data curation, Writing-original draft, Writing-reviewing \& editing. Jingli Wang: Writing-reviewing \& editing, Funding acquisition.
\section*{Declarations}
\textbf{Conflict of interest} The authors declare no conflict of interest.
\begin{appendices}
\section{Vertex hunting algorithm}\label{VHAandExactly}
Algorithm \ref{alg:SP} is the SP algorithm.
	\begin{algorithm}
		\caption{\textbf{Successive Projection (SP)} \cite{gillis2015semidefinite}}
		\label{alg:SP}
		\begin{algorithmic}[1]
			\Require Near-separable matrix $Y_{sp}=S_{sp}M_{sp}+Z_{sp}\in\mathbb{R}^{m\times n}_{+}$ , where $S_{sp}, M_{sp}$ should satisfy Assumption 1 \cite{gillis2015semidefinite}, the number $r$ of columns to be extracted.
			\Ensure Set of indices $\mathcal{K}$ such that $Y_{sp}(\mathcal{K},:)\approx S$ (up to permutation)
			\State Let $R=Y_{sp}, \mathcal{K}=\{\}, k=1$.
			\State \textbf{While} $R\neq 0$ and $k\leq r$ \textbf{do}
			\State ~~~~~~~$k_{*}=\mathrm{argmax}_{k}\|R(k,:)\|_{F}$.
			\State ~~~~~~$u_{k}=R(k_{*},:)$.
			\State ~~~~~~$R\leftarrow (I-\frac{u_{k}u'_{k}}{\|u_{k}\|^{2}_{F}})R$.
			\State ~~~~~~$\mathcal{K}=\mathcal{K}\cup \{k_{*}\}$.
			\State ~~~~~~k=k+1.
			\State \textbf{end while}
		\end{algorithmic}
	\end{algorithm}
\section{Proofs under MMDF}
\subsection{Proof of Proposition \ref{idMMDF}}
\begin{proof}
This proposition holds immediately by the first statement of Theorem 2.1 \cite{mao2020estimating} since we let $P$ be a full rank matrix and Theorem 2.1 \cite{mao2020estimating} is a distribution-free result such that it always holds without constraining the distribution of $A$.
\end{proof}
\subsection{Proof of Lemma \ref{IS}}
\begin{proof}
Since $\Omega=\Pi \rho P\Pi'=U\Lambda U'$ and $U'U=I_{K}$, we have $U=\Pi \rho P\Pi'U\Lambda^{-1}$, i.e., $B=\rho P\Pi' U\Lambda^{-1}$. So $B$ is unique. Since $U=\Pi B$, we have $U(\mathcal{I},:)=\Pi(\mathcal{I},:)B=B$ and the lemma follows.
\end{proof}
\subsection{Proof of Theorem \ref{Main}}
\begin{proof}
First, we prove the following lemma to provide an upper bound of row-wise eigenspace error $\|\hat{U}\hat{U}'-UU'\|_{2\rightarrow\infty}$.
\begin{lem}\label{rowwiseerror}
(Row-wise eigenspace error) Under $MMDF_{n}(K,P,\Pi,\rho,\mathcal{F})$, when Assumption \ref{assumesparsity} holds, suppose $\sigma_{K}(\Omega)\geq C\sqrt{\gamma\rho n\mathrm{log}(n)}$ for some $C>0$, with probability at least $1-o(n^{-3})$, we have
\begin{align*}
\|\hat{U}\hat{U}'-UU'\|_{2\rightarrow\infty}=O(\frac{\sqrt{\gamma n\mathrm{log}(n)}}{\sigma_{K}(P)\rho^{0.5} \lambda^{1.5}_{K}(\Pi'\Pi)}).
\end{align*}
\end{lem}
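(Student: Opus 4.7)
The plan is to bound $\|\hat{U}\hat{U}'-UU'\|_{2\to\infty}$ by combining a spectral-norm concentration inequality for $A-\Omega$ with two deterministic structural estimates and a row-wise eigenspace perturbation argument, and then substituting everything into one composite bound.

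First, I would show that $\|A-\Omega\|=O(\sqrt{\gamma\rho n\,\mathrm{log}(n)})$ with probability at least $1-o(n^{-3})$. The upper-triangular entries of $A-\Omega$ are independent, mean zero, bounded in absolute value by $\tau$, and have variance at most $\gamma\rho$. A matrix Bernstein-type inequality for symmetric random matrices (for example, the Lei--Rinaldo or Tropp versions) therefore gives a tail of the form $\exp(-ct^{2}/(\gamma\rho n+\tau t))$. Choosing $t$ of order $\sqrt{\gamma\rho n\,\mathrm{log}(n)}$ and invoking Assumption \ref{assumesparsity} ($\gamma\rho n\geq\tau^{2}\mathrm{log}(n)$) makes the variance term dominate in the Bernstein denominator and yields the advertised spectral bound.

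Next, I would handle two deterministic facts tied to the MMDF decomposition $\Omega=\rho\Pi P\Pi'$. Because $\mathrm{rank}(P)=\mathrm{rank}(\Pi)=K$, I would use $\sigma_{K}(\Omega)\geq\rho\,\sigma_{K}(P)\,\sigma_{K}^{2}(\Pi)=\rho\,\sigma_{K}(P)\,\lambda_{K}(\Pi'\Pi)$, giving a clean lower bound on the eigengap. Lemma \ref{IS} then provides $U=\Pi B$ with $B=U(\mathcal{I},:)$; the identity $I_{K}=U'U=B'\Pi'\Pi B$ implies $\|B\|^{2}=1/\lambda_{K}(\Pi'\Pi)$, and since each row of $\Pi$ is a probability vector on $K$ coordinates, $\|\Pi\|_{2\to\infty}\leq 1$. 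Hence $\|U\|_{2\to\infty}\leq\|\Pi\|_{2\to\infty}\|B\|\leq 1/\sqrt{\lambda_{K}(\Pi'\Pi)}$.

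The crux of the argument is a row-wise perturbation bound of the form
\begin{align*}
\|\hat{U}\hat{U}'-UU'\|_{2\to\infty}\leq C\cdot\|U\|_{2\to\infty}\cdot\frac{\|A-\Omega\|}{\sigma_{K}(\Omega)}.
\end{align*}
To justify this I would exploit $\hat{U}\hat{U}'=A\hat{U}\hat{\Lambda}^{-1}\hat{U}'$ and $UU'=\Omega U\Lambda^{-1}U'$ and write
\begin{align*}
\hat{U}\hat{U}'-UU' = (A-\Omega)\hat{U}\hat{\Lambda}^{-1}\hat{U}' + \Omega\bigl(\hat{U}\hat{\Lambda}^{-1}\hat{U}'-U\Lambda^{-1}U'\bigr).
\end{align*}
The second summand has row norms controlled by $\|U\|_{2\to\infty}\cdot\|\hat{U}\hat{U}'-UU'\|$, which Davis--Kahan bounds via $\|A-\Omega\|/\sigma_{K}(\Omega)$ (the eigengap condition $\sigma_{K}(\Omega)\geq C\sqrt{\gamma\rho n\,\mathrm{log}(n)}$ is exactly what makes Davis--Kahan applicable). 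The first summand is the delicate one, because $\|e_{i}'(A-\Omega)\hat{U}\|$ is not usefully bounded by $\|A-\Omega\|\cdot\|\hat{U}\|$: that would waste a factor of order $\sqrt{n}$. I expect this step to be the main obstacle. The standard remedy, which I would import from the row-wise eigenspace framework of Cape--Tang--Priebe or Mao--Sarkar--Chakrabarti, is a leave-one-out decoupling that replaces $\hat{U}$ by an auxiliary matrix $\hat{U}^{(i)}$ computed from $A$ with its $i$-th row and column zeroed out. This makes row $i$ of $A-\Omega$ independent of $\hat{U}^{(i)}$, so that a row-wise Bernstein step delivers $\|e_{i}'(A-\Omega)\hat{U}^{(i)}\|=O(\|U\|_{2\to\infty}\sqrt{\gamma\rho n\,\mathrm{log}(n)})$, and a swap back to $\hat{U}$ costs only a higher-order term.

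Finally, substituting $\|A-\Omega\|=O(\sqrt{\gamma\rho n\,\mathrm{log}(n)})$, $\|U\|_{2\to\infty}\leq 1/\sqrt{\lambda_{K}(\Pi'\Pi)}$, and $\sigma_{K}(\Omega)\geq\rho\sigma_{K}(P)\lambda_{K}(\Pi'\Pi)$ into the perturbation inequality gives
\begin{align*}
\|\hat{U}\hat{U}'-UU'\|_{2\to\infty}=O\!\left(\frac{1}{\sqrt{\lambda_{K}(\Pi'\Pi)}}\cdot\frac{\sqrt{\gamma\rho n\,\mathrm{log}(n)}}{\rho\,\sigma_{K}(P)\,\lambda_{K}(\Pi'\Pi)}\right)=O\!\left(\frac{\sqrt{\gamma n\,\mathrm{log}(n)}}{\sigma_{K}(P)\sqrt{\rho}\,\lambda_{K}^{1.5}(\Pi'\Pi)}\right),
\end{align*}
matching the stated bound, with the $1-o(n^{-3})$ probability inherited directly from the concentration step in the first paragraph.
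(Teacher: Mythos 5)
Your proposal reaches the stated bound and is a viable route, but it is genuinely different from the paper's argument, and the difference is concentrated exactly at the step you flag as the crux. The paper never bounds the spectral norm $\|A-\Omega\|$ and never uses leave-one-out decoupling: it applies the scalar/matrix Bernstein inequality of \cite{tropp2012user} row by row to bound the matrix infinity norm $\|A-\Omega\|_{\infty}=\mathrm{max}_{i}\sum_{j}|A(i,j)-\Omega(i,j)|$ by $O(\sqrt{\gamma\rho n\mathrm{log}(n)})$ (this is precisely where Assumption \ref{assumesparsity} makes the variance term dominate the $\tau$-term), and then invokes Theorem 4.2 of \cite{cape2019the}, which directly yields $\|\hat{U}-U\mathcal{O}\|_{2\rightarrow\infty}\leq 14\,\|U\|_{2\rightarrow\infty}\|A-\Omega\|_{\infty}/\sigma_{K}(\Omega)$ once $\sigma_{K}(\Omega)\geq 4\|A-\Omega\|_{\infty}$, followed by the elementary inequality $\|\hat{U}\hat{U}'-UU'\|_{2\rightarrow\infty}\leq 2\|\hat{U}-U\mathcal{O}\|_{2\rightarrow\infty}$; the two structural facts you derive by hand ($\sigma_{K}(\Omega)\geq\rho\sigma_{K}(P)\lambda_{K}(\Pi'\Pi)$ and $\|U\|_{2\rightarrow\infty}\leq\lambda_{K}^{-1/2}(\Pi'\Pi)$) are quoted from Lemmas II.4 and 3.1 of \cite{mao2020estimating} and your derivations of them are correct. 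Your route—spectral-norm concentration plus a leave-one-out treatment of $e_{i}'(A-\Omega)\hat{U}$—is essentially the alternative the paper itself acknowledges in its remark citing Theorem 4.2 of \cite{chen2021spectral}; it buys generality and is standard in the $\ell_{2\rightarrow\infty}$ literature, but it is the heavier machinery, and your sketch leaves the decoupling step (construction of $\hat{U}^{(i)}$, the swap-back estimate, and verification of the incoherence and gap conditions those theorems require under only Assumption \ref{assumesparsity} and $\sigma_{K}(\Omega)\geq C\sqrt{\gamma\rho n\mathrm{log}(n)}$) unexecuted, whereas the paper's choice of the $\|\cdot\|_{\infty}$-based bound of \cite{cape2019the} makes that entire difficulty disappear: under Assumption \ref{assumesparsity} the row-sum norm already concentrates at the same order $\sqrt{\gamma\rho n\mathrm{log}(n)}$ as the spectral norm, so nothing is lost in the final rate. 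If you carry out your plan, cite or reproduce the leave-one-out theorem precisely rather than Davis--Kahan alone, since the plain bound $\|e_{i}'(A-\Omega)\hat{U}\|\leq\|A-\Omega\|$ indeed wastes a factor of order $\sqrt{n}$, as you note.
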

\begin{proof}
First, we use Theorem 1.4 (the Matrix Bernstein) of \cite{tropp2012user} to build an upper bound of $\|A-\Omega\|_{\infty}$. This theorem is given below
\begin{thm}\label{Bern}
Consider a finite sequence $\{X_{k}\}$ of independent, random, self-adjoint matrices with dimension $d$. Assume that each random matrix satisfies
\begin{align*}
\mathbb{E}[X_{k}]=0, \mathrm{and~}\|X_{k}\|\leq R~\mathrm{almost~surely}.
\end{align*}
Then, for all $t\geq 0$,
\begin{align*}
\mathbb{P}(\|\sum_{k}X_{k}\|\geq t)\leq d\cdot \mathrm{exp}(\frac{-t^{2}/2}{\sigma^{2}+Rt/3}),
\end{align*}
where $\sigma^{2}:=\|\sum_{k}\mathbb{E}(X^{2}_{k})\|$.
\end{thm}
Let $x=(x_{1},x_{2},\ldots, x_{n})'$ be any $n\times 1$ vector. For any $i,j\in[n]$, we have $\mathbb{E}[(A(i,j)-\Omega(i,j))x(j)]=0$ and $\|(A(i,j)-\Omega(i,j))x(j)\|\leq \tau\|x\|_{\infty}$. Set $R=\tau\|x\|_{\infty}$. Since $\|\sum_{j=1}^{n}\mathbb{E}[(A(i,j)-\Omega(i,j))^{2}x^{2}(j)]\|=\|\sum_{j=1}^{n}x^{2}(j)\mathbb{E}[(A(i,j)-\Omega(i,j))^{2}]\|=\|\sum_{j=1}^{n}x^{2}(j)\mathrm{Var}(A(i,j))\|\leq \gamma\rho\sum_{j=1}^{n}x^{2}(j)$, by Theorem \ref{Bern}, for any $t\geq 0$ and $i\in[n]$, we have
\begin{align*}
\mathbb{P}(|\sum_{j=1}^{n}(A(i,j)-\Omega(i,j))x(j)|>t)\leq 2\mathrm{exp}(-\frac{t^{2}/2}{\gamma\rho\sum_{j=1}^{n}x^{2}(j)+\frac{Rt}{3}}).
\end{align*}
Set $x(j)$ as $1$ or $-1$ such that $(A(i,j)-\Omega(i,j))y(j)=|A(i,j)-\Omega(i,j)|$, we have
\begin{align*}
\mathbb{P}(\|A-\Omega\|_{\infty}>t)\leq 2\mathrm{exp}(-\frac{t^{2}/2}{\gamma\rho n+\frac{Rt}{3}}).
\end{align*}
Set $t=\frac{\alpha+1+\sqrt{(\alpha+1)(\alpha+19)}}{3}\sqrt{\gamma\rho n\mathrm{log}(n)}$ for any $\alpha>0$. By assumption \ref{assumesparsity}, we have
\begin{align*}
\mathbb{P}(\|A-\Omega\|_{\infty}>t)\leq 2\mathrm{exp}(-\frac{t^{2}/2}{\gamma\rho n+\frac{Rt}{3}})\leq n^{-\alpha}.
\end{align*}
By Theorem 4.2 of \cite{cape2019the}, when $\sigma_{K}(\Omega)\geq 4\|A-\Omega\|_{\infty}$, we have
\begin{align*}
\|\hat{U}-U\mathcal{O}\|_{2\rightarrow\infty}\leq 14\frac{\|A-\Omega\|_{\infty}}{\sigma_{K}(\Omega)}\|U\|_{2\rightarrow\infty},
\end{align*}
where $\mathcal{O}$ is a $K\times K$ orthogonal matrix. With probability at least $1-o(n^{-\alpha})$, we have
\begin{align*}
\|\hat{U}-U\mathcal{O}\|_{2\rightarrow\infty}=O(\frac{\|U\|_{2\rightarrow\infty}\sqrt{\gamma\rho n\mathrm{log}(n)}}{\sigma_{K}(\Omega)}).
\end{align*}
Since $\hat{U}'\hat{U}=I_{K},U'U=I_{K}$, by basic algebra, we have $\|\hat{U}\hat{U}'-UU'\|_{2\rightarrow\infty}\leq2\|\hat{U}-U\mathcal{O}\|_{2\rightarrow\infty}$, which gives
\begin{align*}
\|\hat{U}\hat{U}'-UU'\|_{2\rightarrow\infty}=O(\frac{\|U\|_{2\rightarrow\infty}\sqrt{\gamma\rho n\mathrm{log}(n)}}{\sigma_{K}(\Omega)}).
\end{align*}
Since $\sigma_{K}(\Omega)\geq \sigma_{K}(P)\rho \lambda_{K}(\Pi'\Pi)$ by Lemma II.4 of \cite{mao2020estimating} and $\|U\|^{2}_{2\rightarrow\infty}\leq \frac{1}{\lambda_{K}(\Pi'\Pi)}$ by Lemma 3.1 of \cite{mao2020estimating}, where these two lemmas are distribution-free and always hold as long as Equations (\ref{DefinePI}), (\ref{definP}) and (\ref{DefinOmega}) hold, we have
\begin{align*}
\|\hat{U}\hat{U}'-UU'\|_{2\rightarrow\infty}=O(\frac{\sqrt{\gamma n\mathrm{log}(n)}}{\sigma_{K}(P)\rho^{0.5} \lambda^{1.5}_{K}(\Pi'\Pi)}).
\end{align*}
Set $\alpha=3$, and this claim follows.
\begin{rem}
Alternatively, Theorem 4.2. of \cite{chen2021spectral} can also be applied to obtain the upper bound of $\|\hat{U}\hat{U}'-UU'\|_{2\rightarrow\infty}$, and this bound is similar to the one in Lemma \ref{rowwiseerror}.
\end{rem}
\end{proof}
For convenience, set $\varpi=\|\hat{U}\hat{U}'-UU'\|_{2\rightarrow\infty}$. Since DFSP is the SPACL algorithm without the prune step of \cite{mao2020estimating}, the proof of DFSP's consistency is the same as SPACL except for the row-wise eigenspace error step where we need to consider $\gamma$ which is directly related with distribution $\mathcal{F}$. By Lemma \ref{rowwiseerror} and Equation (3) in Theorem 3.2 of \cite{mao2020estimating} where the proof is distribution-free, there exists a $K\times K$ permutation matrix $\mathcal{P}$ such that
\begin{align*}	\mathrm{max}_{i\in[n]}\|e'_{i}(\hat{\Pi}-\Pi\mathcal{P})\|_{1}=O(\varpi \kappa(\Pi'\Pi)\sqrt{\lambda_{1}(\Pi'\Pi)})=O(\frac{\kappa^{1.5}(\Pi'\Pi)\sqrt{\gamma n\mathrm{log}(n)}}{\sigma_{K}(P)\rho^{0.5} \lambda_{K}(\Pi'\Pi)}).
\end{align*}
\end{proof}
\subsection{Proof of Corollary \ref{AddConditions}}
\begin{proof}
When $\lambda_{K}(\Pi'\Pi)=O(\frac{n}{K})$ and $K=O(1)$, we have $\kappa(\Pi'\Pi)=O(1)$ and $\lambda_{K}(\Pi'\Pi)=O(n/K)=O(n)$. Then the corollary follows immediately by Theorem \ref{Main}.
\end{proof}
\section{Extra simulation results}
In this part, we consider two extra simulations: imbalanced networks and running time. For imbalanced networks, we study the stability of DFSP and its competitors when there are small-size communities. For running time, we compare the running time for each method by increasing the network size $n$. For simplicity, we only consider the case when $\mathcal{F}$ is Normal distribution here. When $A(i,j)\sim\mathrm{Normal}(\Omega(i,j),\sigma^{2}_{A})$, let all nodes be pure, $K=2, \rho=1$, and $\sigma^{2}_{A}=1$. Set $P$ as
\[P=\begin{bmatrix}
    1&-0.2\\
    -0.2&0.9\\
\end{bmatrix}.\]
Let the first community has $\delta n$ nodes. So, the second community has $(1-\delta)n$ nodes. Based on the above settings, we consider the following two simulations.

\textbf{Changing $\delta$}: Let $n=200$ or $n=1000$. Let $\delta$ range in $\{0.025, 0.05, 0.075, \ldots, 0.5\}$. For this case, the two evaluation metrics Hamming error and Relative are not suitable for imbalanced networks. To prioritize the ability of DFSP and its competitors to detect the minority communities, we consider the following two metrics who are the smaller the better.
\begin{align*}
&\mathrm{Clustering}~l_{1}~\mathrm{error}=\mathrm{min}_{\mathcal{P}\in\{ K\times K\mathrm{~permutation~matrix}\}}\mathrm{max}_{k\in[K]}\frac{\|\hat{\Pi}(:,k)-(\Pi\mathcal{P})(:,k)\|_{1}}{\|(\Pi\mathcal{P})(:,k)\|_{1}},\\
&\mathrm{Clustering}~l_{2}~\mathrm{error}=\mathrm{min}_{\mathcal{P}\in\{K\times K\mathrm{~permutation~matrix}\}}\mathrm{max}_{k\in[K]}\frac{\|\hat{\Pi}(:,k)-(\Pi\mathcal{P})(:,k)\|_{F}}{\|(\Pi\mathcal{P})(:,k)\|_{F}}.
\end{align*}

Unlike Hamming error which measures the $l_{1}$ difference between $\Pi$  and $\hat{\Pi}$ up to a permutation of community labels, Clustering $l_{1}$ error measures the maximum $l_{1}$ difference between the size of the true $k$-th community and the size of the estimated $k$-th community up to a permutation of community labels among all $K$ communities. Therefore, Clustering $l_{1}$ error can evaluate the ability of a community detection method to detect the minority communities. Similar arguments hold for the Clustering $l_{2}$ error.

Panels (a)-(f) of Figure \ref{EXTRASim} display numerical results for changing $\delta$. For the case when $n=200$, we find that DFSP and its competitors perform similarly and all of them can successfully detect the minority community when $\delta\in[0.125,0.5]$, i.e., the proportion of community sizes between the largest community and the smallest community locates in $[1,7]$. KDFSP successfully estimates the number of communities $K$ when $\delta\in[0.15,0.5]$ while NB and BHac fail to infer $K$ for all cases. For the case when $n=1000$, DFSP and its competitors successfully detect all communities when $\delta\in[0.1, 0.5]$, i.e., the proportion of community sizes between the largest community and the smallest community locates in $[1,9]$. KDFSP correctly determines $K$ when $\delta\in[0.05,0.5]$ while its competitors fail to find $K$.

\textbf{Changing $n$}: Let $\delta=0.075$ or $\delta=0.1$, i.e., let the proportion of community sizes between the largest community and the smallest community be $\frac{37}{3}$ or 9. Let $n$ range in $\{2000,4000,6000,\ldots,12000\}$. For simplicity, we only report the averaged Clustering $l_{1}$ error, averaged Clustering $l_{2}$ error, and averaged running time over 100 repetitions for DFSP and its competitors. Figure \ref{EXTRASimN} displays the numerical results. We see that DFSP is better than GeoNMF, SVM-cD, and OCCAM in both estimation accuracy and running time. In particular, DFSP runs much faster than OCCAM. Meanwhile, DFSP performs satisfactorily
for its small clustering errors for the two cases $\delta=0.075$ and $\delta=0.1$. By comparing panel (a) and panel (e) (panel (b) and panel (f)), we see that all methods perform poorer for a more imbalanced network and this result is consistent with that of changing $\delta$.  By comparing panel (c) and panel (g) (panel (d) and panel (h)), we see that each method takes more time to detect a more imbalanced network.

\begin{figure}
\centering
\resizebox{\columnwidth}{!}{
\subfigure[$n=200$]{\includegraphics[width=0.35\textwidth]{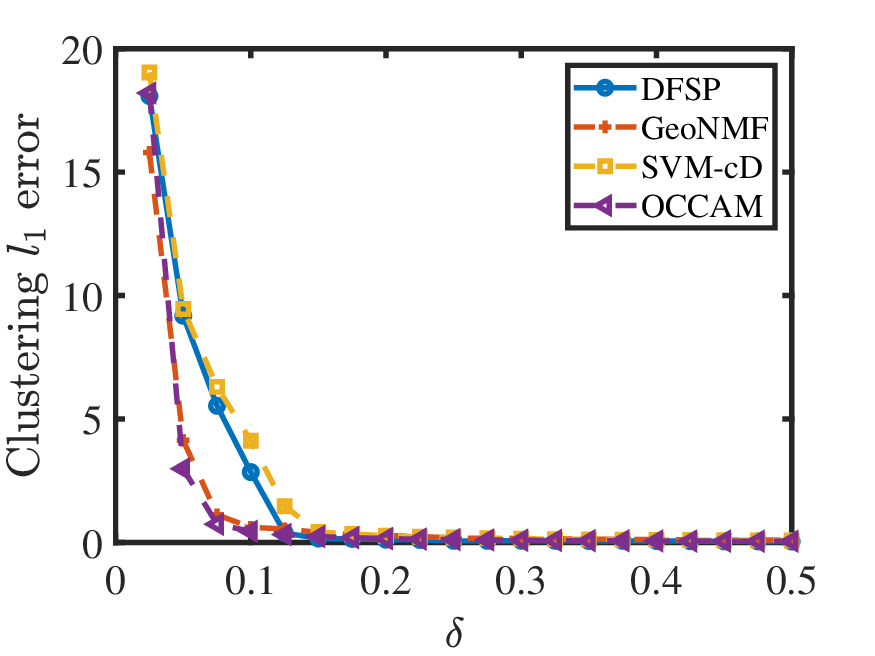}}
\subfigure[$n=200$]{\includegraphics[width=0.35\textwidth]{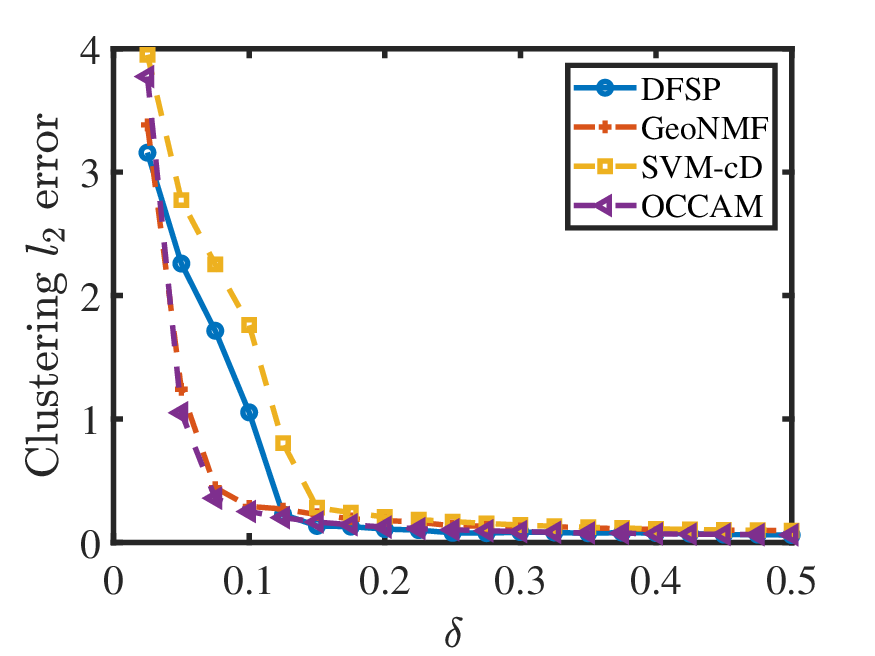}}
\subfigure[$n=200$]{\includegraphics[width=0.35\textwidth]{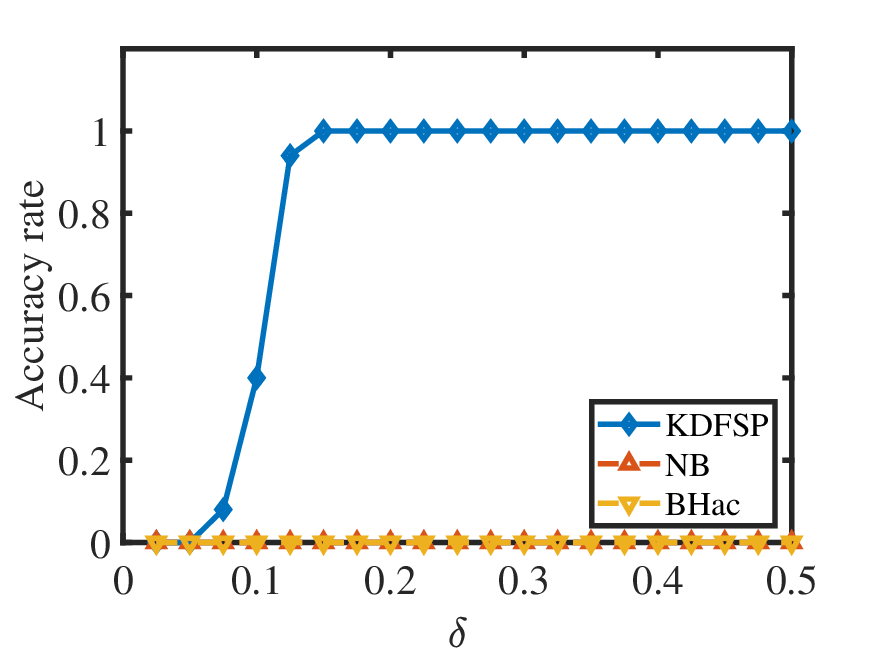}}
}
\resizebox{\columnwidth}{!}{
\subfigure[$n=1000$]{\includegraphics[width=0.35\textwidth]{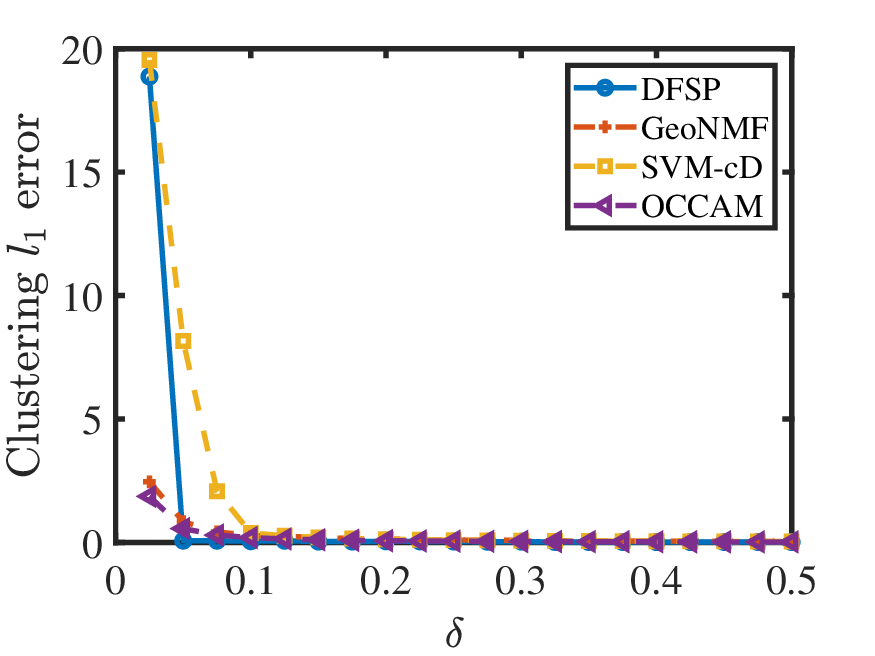}}
\subfigure[$n=1000$]{\includegraphics[width=0.35\textwidth]{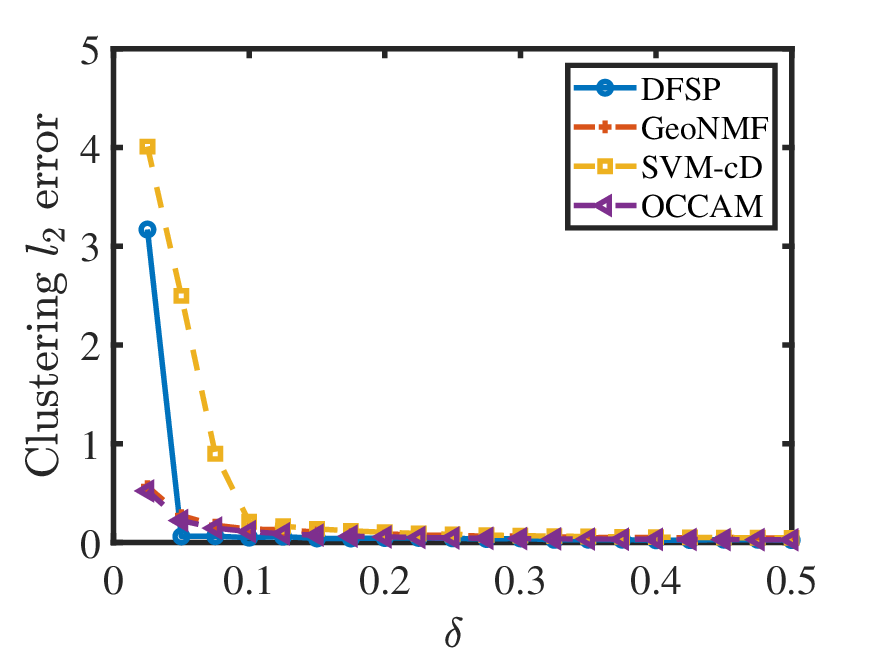}}
\subfigure[$n=1000$]{\includegraphics[width=0.35\textwidth]{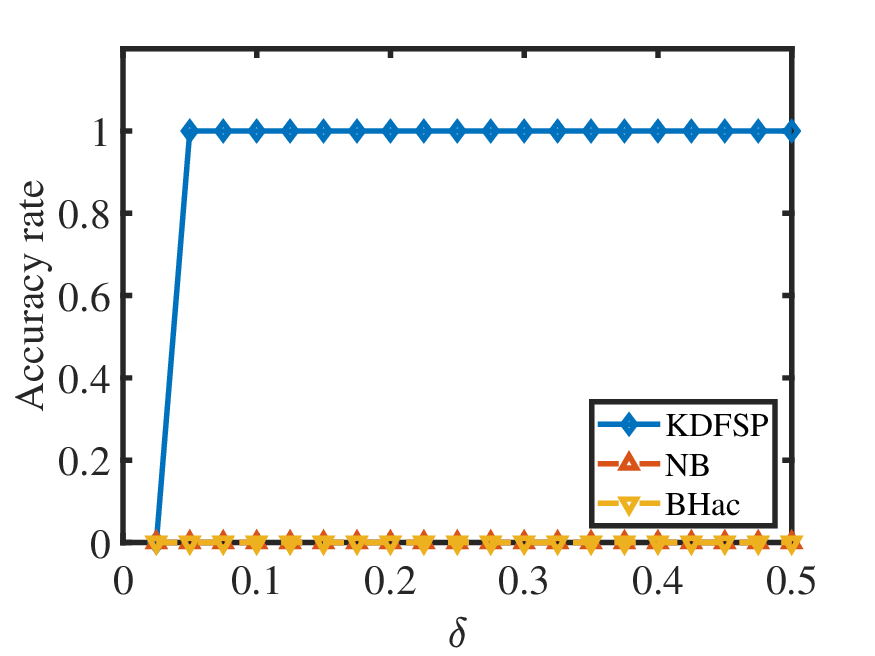}}
}
\caption{Numerical results of changing $\delta$.}
\label{EXTRASim} 
\end{figure}

\begin{figure}
\centering
\resizebox{\columnwidth}{!}{
\subfigure[$\delta=0.075$]{\includegraphics[width=0.35\textwidth]{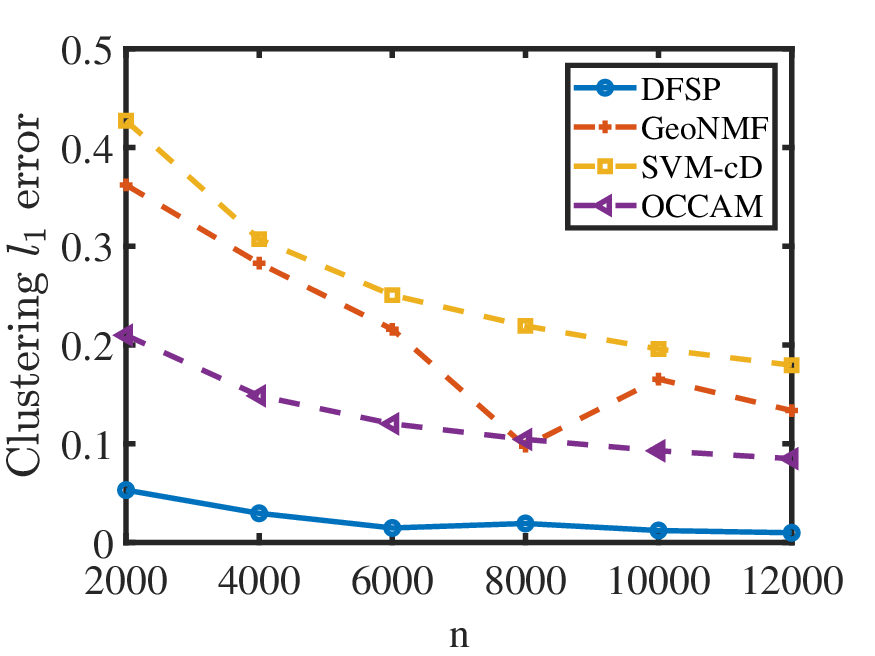}}
\subfigure[$\delta=0.075$]{\includegraphics[width=0.35\textwidth]{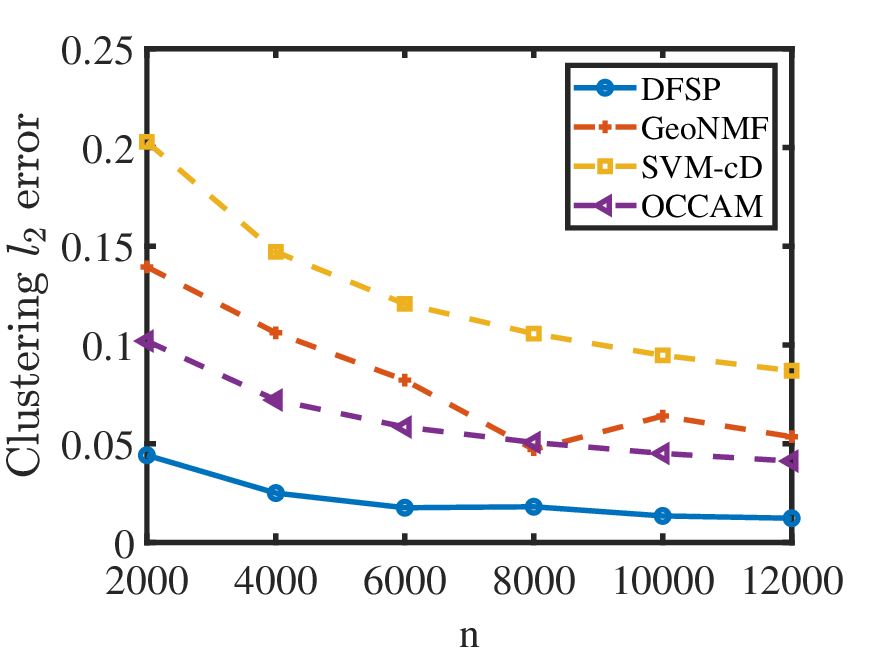}}
\subfigure[$\delta=0.075$]{\includegraphics[width=0.35\textwidth]{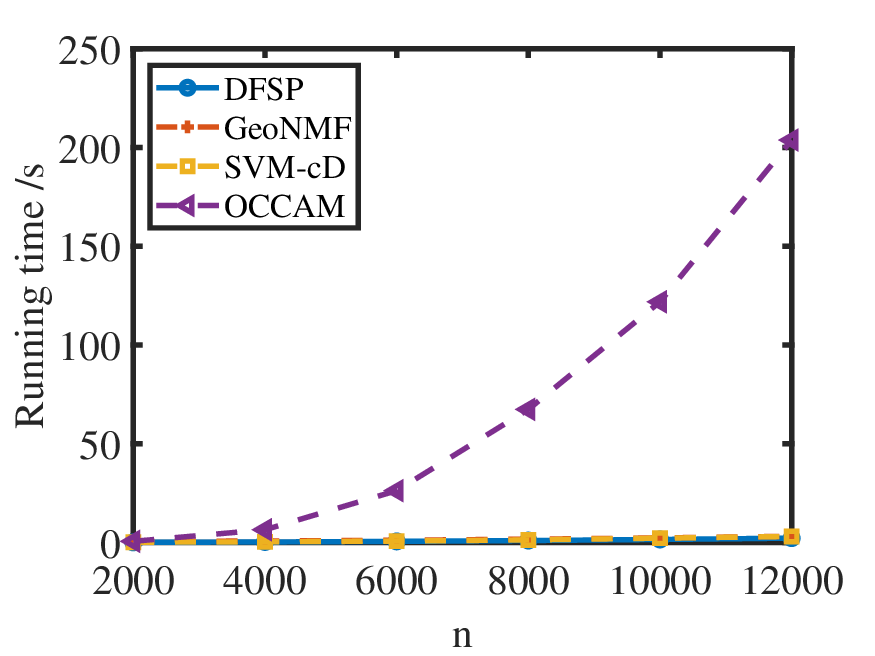}}
\subfigure[$\delta=0.075$]{\includegraphics[width=0.35\textwidth]{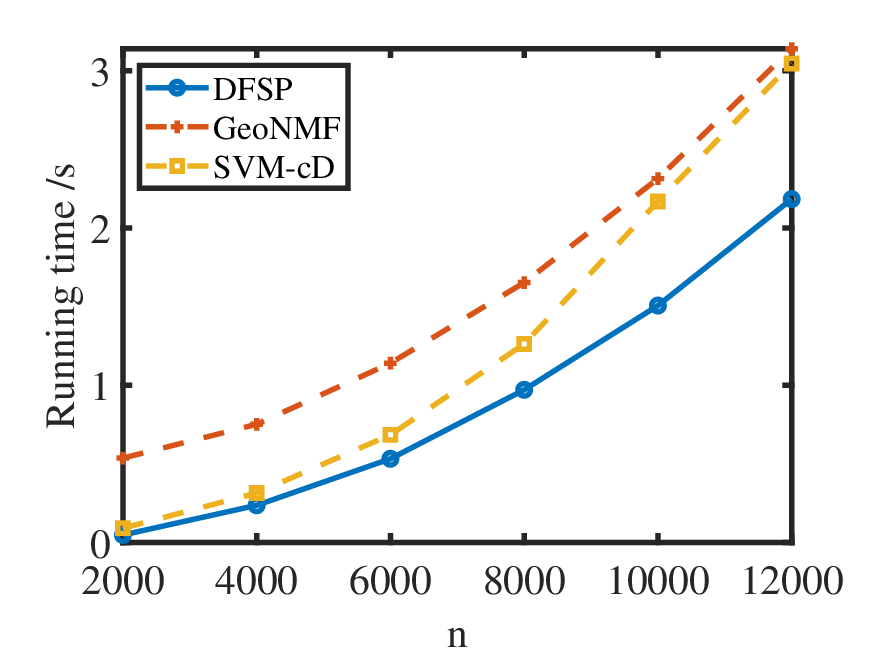}}
}
\resizebox{\columnwidth}{!}{
\subfigure[$\delta=0.1$]{\includegraphics[width=0.35\textwidth]{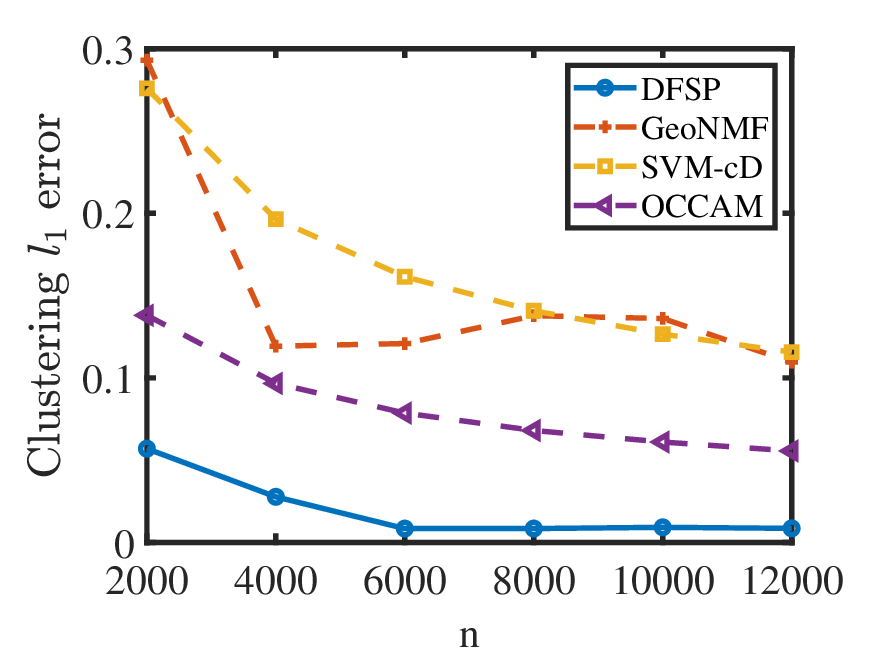}}
\subfigure[$\delta=0.1$]{\includegraphics[width=0.35\textwidth]{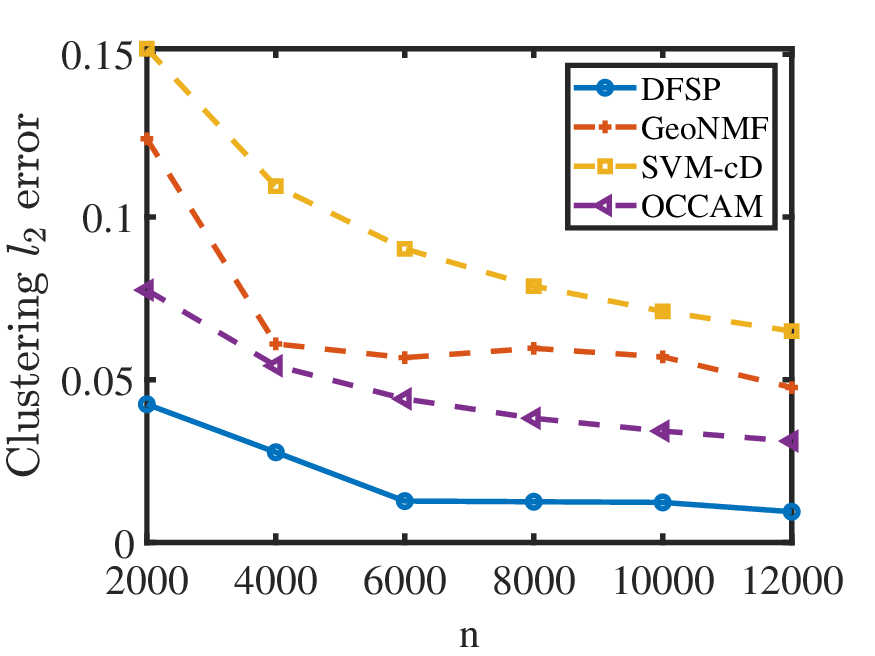}}
\subfigure[$\delta=0.1$]{\includegraphics[width=0.35\textwidth]{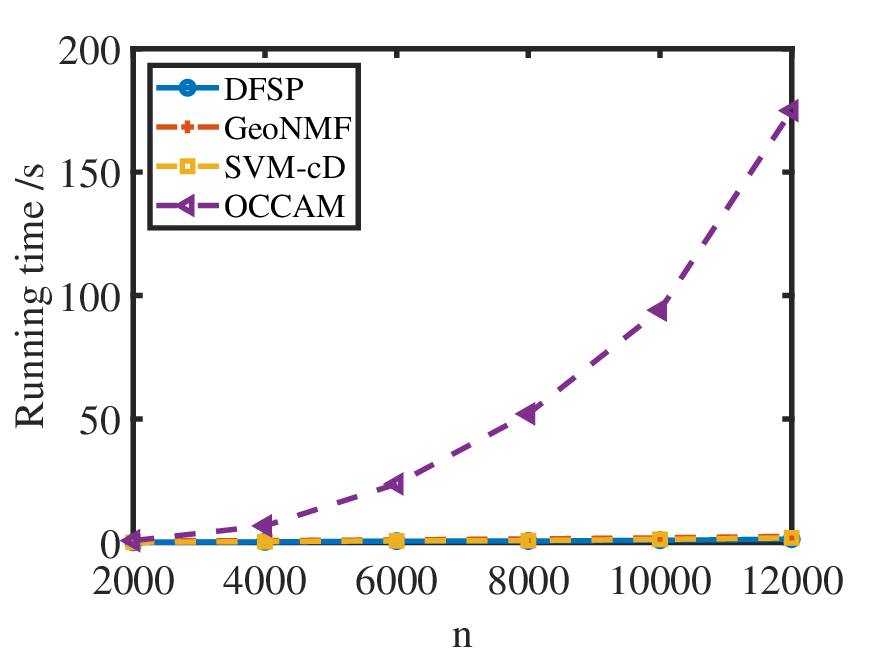}}
\subfigure[$\delta=0.1$]{\includegraphics[width=0.35\textwidth]{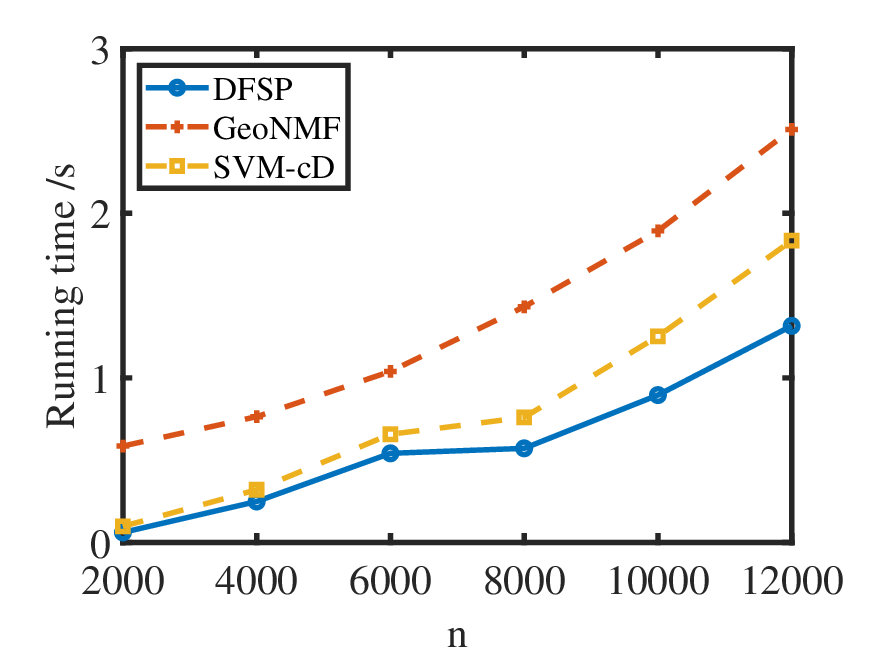}}
}
\caption{Numerical results of changing $n$.}
\label{EXTRASimN} 
\end{figure}
\end{appendices}


\bibliography{refMMDF}

\end{document}